\pdfoutput=1

\documentclass[a4paper,USenglish,cleveref,autoref,thm-restate]{lipics-v2021}

\hideLIPIcs

\usepackage{iftex}

\usepackage{amsfonts}
\usepackage{algorithm}
\usepackage[noend]{algpseudocode}
\usepackage{booktabs}
\usepackage{float}
\usepackage{verbatim}
\usepackage{stackengine}
\usepackage{quantikz}
\usepackage{braket}
\usepackage{physics}
\usepackage{todonotes}
\usepackage{dsfont}

\usepackage{empheq}
\usepackage{mathrsfs}
\usepackage{dblfloatfix}
\usepackage{cuted}
\usepackage{pgfplots}
\usetikzlibrary{arrows}
\usetikzlibrary{positioning}

\nolinenumbers

\ifPDFTeX
  \usepackage{underscore}
\else
  \usepackage{breakurl}
\fi

\title{Quantum Algorithms for Magic Square Diophantine Equations}

\titlerunning{Quantum Algorithms for Magic Square Diophantine Equations}

\author{Dimitrios Thanos}
  {LIACS, Leiden University}
  {dimitrios.thanos.0@gmail.com}
  {0000-0002-0719-1036}
  {Supported
by the Dutch National Growth Fund, as part of the Quantum Delta NL program.}

\author{Marcello Bonsangue}
  {LIACS, Leiden University}
  {m.m.bonsangue@liacs.leidenuniv.nl}
  {0000-0003-3746-3618}
  {}

\author{Alfons Laarman}
  {LIACS, Leiden University}
  {a.w.laarman@liacs.leidenuniv.nl}
  {0000-0002-2433-4174}
  {}

\authorrunning{D. Thanos, M. Bonsangue, and A. Laarman}

\Copyright{Dimitrios Thanos, Marcello Bonsangue, and Alfons Laarman}

\ccsdesc[500]{Theory of computation~Quantum computation theory}

\keywords{quantum algorithms, magic squares, Diophantine equations}

\begin{document}

\maketitle

\begin{abstract}

Magic-square constraints define Diophantine systems whose solutions, in several natural families, exhibit rigid periodic structure. We study this structure in an oracle setting, where a marked set of integers is given by black-box access and the goal is to decide whether it encodes a magic square. For $3\times 3$ magic squares and weighted variants, we prove explicit periodic characterizations that reduce detection to period finding. For larger orders, we identify a class of solutions built from repeated arithmetic patterns, which can be detected via the quantum Fourier transform. We then introduce a shifted-oracle method, based on interference between an oracle and its translates, that helps reconstruct solutions in structured cases. Together, these ingredients give a quantum framework for detecting and reconstructing certain magic-square solutions under suitable assumptions. We also derive finite bounds that make some instances exhaustively solvable and obtain Shor-based criteria for certifying non-existence in restricted number-theoretic settings. As an application, we sketch a quantum communication protocol based on an oracle encoding of a large magic-square solution.

\end{abstract}

\section{Introduction}\label{intro}

Consider a binary string $x \in \{0,1\}^B$. Exactly $n^2$ positions, with $n \in \mathbb{N}$, contain a $1$, and all other positions contain $0$. Define the marked set

$$S=\{\, i \in \{1,\dots,B\}\mid x_i = 1 \,\}.$$

Assume that $x$ is given as a black-box oracle: on query $i \in \{1,\dots,B\}$, the oracle returns the bit $x_i$; see Figure~\ref{fig:memory-bits}. We ask whether the marked set $S$ encodes a solution of a prescribed magic-square\footnote{A magic square of order $n$, i.e. an $n\times n$ magic square.} Diophantine system. Since $|S|=n^2$, this is a decision problem in which the marked set itself is the candidate solution.

\begin{figure}[ht]
\centering
\begin{tikzpicture}[
  bit/.style={draw, minimum width=7mm, minimum height=7mm, inner sep=0pt, font=\ttfamily},
  idx/.style={font=\scriptsize}
]
\def\startindex{0}

\node[bit] (c0) {1};
\foreach \b [count=\i from 1] in {0,1,1,0,0,1,0,1,1,1} {
  \node[bit, right=0pt of c\the\numexpr\i-1\relax] (c\i) {\b};
}
\foreach \b [count=\i from 0] in {0,0,1,1,0,0,1,0,1,1,1} {
  \node[idx, above=2pt of c\i] {\the\numexpr\startindex+\i\relax};
}
\end{tikzpicture}
\caption{Representation of the entries of an oracle encoding a binary string. In the classical setting, only one index can be queried at a time, and the output lies in $\{0,1\}$.}
\label{fig:memory-bits}
\end{figure}

For large $B$, this task may require many oracle queries in the classical setting, since each query reveals only one entry of $x$. If instead the string is provided as a \emph{quantum phase oracle}, the oracle can be queried in superposition, and the Quantum Fourier Transform can be used to extract information about periodic structure in the marked set. When $B=2^q$, a classical FFT-based analysis reads all $B$ oracle values and performs $O(B\log B)$ additional operations. A QFT-based procedure uses one phase-oracle query per shot and $O((\log B)^2)$ elementary quantum gates, followed by repeated sampling and classical verification. 

The connection with magic squares comes from the fact that several solution families exhibit arithmetic-progressions. For $3\times3$ magic squares and weighted variants, we prove explicit periodic characterizations. For larger orders, we identify a class of solutions built from repeated arithmetic progressions; in this class, the number of repetitions scales with the order of the magic square (\autoref{theorem_3_n_progressions}). This repeated structure provides the input pattern exploited by the QFT-based detection step.

Moreover, for sufficiently large order, we also consider a setting in which the oracle may contain additional marked entries. In this setting, the marked set itself is no longer the candidate solution; rather, the task is to detect and reconstruct an \(n^2\)-entry structured subset corresponding to a magic-square solution. When the order is large enough, the repeated
arithmetic-progressions in the solution can contribute enough Fourier weight
to produce prominent peaks, even in the presence of marked points that do not belong to the
solution set. This is compatible with Parseval's theorem: although the Fourier mass is conserved, a large repeated structure can still make the relevant periodicities observable. This is the setting in which a quantum
advantage may arise, since the QFT-based procedure samples the Fourier structure directly,
whereas a classical FFT-based approach must first access the oracle values. In certain structured cases, we device a procedure for the reconstruction of the solutions that seems to be an even stronger candidate for a quantum advantage; It does not only rely on the speedup of QFT vs FFT, but it also employs a shifted-oracle procedure together with the Hadamard test, in a way that, to the best of our knowledge, cannot be matched by a classical algorithm. 

The oracle tasks considered here are deliberately restricted: the goal is not to solve arbitrary magic-square Diophantine systems, but to isolate structured cases in which quantum period-finding and shifted-oracle techniques can be applied. This is in the same spirit as other quantum algorithms and oracle problems that study highly structured settings in order to expose possible quantum advantages, such as Hallgren's algorithm for Pell's equation~\cite{hallgren2007polynomial} and the Forrelation problem~\cite{aaronson2015forrelation}. 

Finally, as an application of our algorithms, we discuss a proof-of-concept quantum communication protocol between two parties with access to a quantum computer.

The paper is organized as follows. \autoref{background} gives background; \autoref{from solutions to periodicities} relates magic-square solutions to periodic patterns; and \autoref{Main alg and large magic sq} presents the QFT-based detection procedure, numerical experiment, and shifted-oracle reconstruction method. The following sections give finite-bound and non-existence results, \autoref{protocol} gives a proof-of-concept communication protocol, and omitted proofs, QFT background, and examples are in the \hyperref[appendix]{Appendix}.

\section{Background}\label{background}

Diophantine equations are polynomial equations with integer coefficients whose solutions are restricted to integers. Matiyasevich’s solution of Hilbert’s tenth problem showed that no general algorithm can determine whether an arbitrary Diophantine equation has an integer solution \cite{hilbert00,matiyasevich70,Jones1980UndecidableDE}. This impossibility motivates specialized approaches for tractable subclasses and the study of their properties in general. Additionally, the study of Diophantine equations is not limited to high-degree or structurally complex cases. That is supported by Skolem's proof that every Diophantine equation is equivalent to one of degree at most four \cite{skolem1934diophantine}, showing that low-degree instances already capture the general problem and therefore deserve dedicated algorithmic treatment. Indeed, one of the landmark quantum algorithms in this area is Hallgren's algorithm for Pell's equation~\cite{hallgren2007polynomial}, a quadratic Diophantine equation of the form $x^2 - dy^2 = 1$, where $d$ is a positive non-square integer. Surprisingly, the classical version of this problem is believed to be harder than integer factorization~\cite{hallgren2007polynomial} while the relevant quantum algorithm can find its solutions in quantum poly-time.

In general, Diophantine equations take the form
$P(x_1,x_2,\ldots,x_n)=0$,
where $P$ is a polynomial with integer coefficients and solutions in $\mathbb{Z}^n$. They range from linear equations to nonlinear problems of major historical significance, such as Fermat's Last Theorem, which has no nontrivial integer solutions for $n>2$, and the Cannonball problem, which admits a unique nontrivial solution \cite{Anglin01021990}. Both involve deceptively simple equations that, famously, took decades to centuries to resolve. Diophantine equations also arise beyond number theory, with connections to logic, scheduling, and integer programming \cite{marker02,buss98,10.1007/978-3-642-15775-2_29,c00e3ffa-4ee2-3869-b578-fa821ca888fa}.

In this work, we focus on magic squares, whose construction can be formulated as a Diophantine system of equations. A \textit{magic square} is a square grid of distinct integers arranged so that the sums of the entries in each row, column, and diagonal are equal. For example, an $n \times n$ magic square can be constructed using the consecutive integers $1,2,\ldots,n^2$. In that case, the common sum is $M = \frac{n(n^2 + 1)}{2}$, known as the \textit{magic constant}. Magic squares constructed in this way are also called \emph{normal}.

From a mathematical perspective, constructing a magic square amounts to solving a system of linear Diophantine equations: requiring each row, column, and diagonal to sum to $M$ imposes linear constraints on the entries. These constraints may be written as
$$
\begin{gathered}
\sum_{j=1}^{n} x_{ij}=M \qquad \text{for all } i=1,\dots,n,\\
\sum_{i=1}^{n} x_{ij}=M \qquad \text{for all } j=1,\dots,n,\\
\sum_{i=1}^{n} x_{ii}=M,\\
\sum_{i=1}^{n} x_{i,n+1-i}=M.
\end{gathered}
$$
where $x_{ij}$ denotes the entry in row $i$ and column $j$. For $n=3$, this gives a system of eight equations.


Magic squares are also related to constraint satisfaction problems since their construction involves assigning values to variables (the cells of the square) subject to constraints (row, column, and diagonal sums being equal to the magic constant)~\cite{rossi06}. Moreover, magic squares provide insights into the solvability of certain classes of Diophantine equations, as their construction involves solving systems of equations with integer solutions \cite{andrews75,dickson19}.

\section{From solutions to periodicities}\label{from solutions to periodicities}

This section analyzes the geometric and algebraic structure of the Diophantine system associated with an $n \times n$ magic square. Exploiting the connection between the periodic structure of magic squares and the corresponding solution space is central to the design of a quantum algorithm for the exact $n^2$-marked decision setting outlined in \autoref{intro} and also deciding the existence or absence of solutions for this type of Diophantine system.

Consider a $3 \times 3$ magic square with integer entries labeled $a$ through $i$, as shown in \autoref{fig:3by3_letters}. Requiring each row, column, and diagonal to sum to the same integer $M$ gives a system of eight linear Diophantine equations in nine variables, together with the standard constraint that all entries are distinct. A solution is thus a 9-tuple of pairwise distinct integers satisfying these equations.

\begin{figure}[ht]
  \centering
  \begin{subfigure}[t]{0.45\linewidth}
    \centering
    \renewcommand{\arraystretch}{1.4}
    \begin{tabular}{|>{\centering\arraybackslash}c|>{\centering\arraybackslash}c|>{\centering\arraybackslash}c|}
      \cline{1-3}
      $a$ & $b$ & $c$ \\
      \cline{1-3}
      $d$ & $e$ & $f$ \\
      \cline{1-3}
      $g$ & $h$ & $i$ \\
      \cline{1-3}
    \end{tabular}
  \end{subfigure}
\hspace{0.01\linewidth}
  \begin{subfigure}[ht]{0.45\linewidth}
    \centering
    
    \begin{equation}\label{dioph_system}
    \left\{
    \begin{aligned}
        x_a + x_b + x_c &= M \\
        x_d + x_e + x_f &= M \\
        x_g + x_h + x_i &= M \\
        x_a + x_d + x_g &= M \\
        x_b + x_e + x_h &= M \\
        x_c + x_f + x_i &= M \\
        x_a + x_e + x_i &= M \\
        x_c + x_e + x_g &= M
    \end{aligned}
    \right.
    \end{equation}
    \label{eq:dioph_system}
  \end{subfigure}

  \caption{An order $3$ magic square and its associated Diophantine system. The entries $a$--$i$ denote the positions in the square, while $x_a,\ldots,x_i$ are the corresponding variables. A valid magic square is a solution of the system, and $M$ is the common sum of each row, column, and diagonal. }
  \label{fig:3by3_letters}
\end{figure}

Since the order of the variables in each equation is irrelevant, each triplet in the system can be viewed as a point in three-dimensional space. All such points lie on the plane $x+y+z=M$, which over $\mathbb{R}$ is a plane in $\mathbb{R}^3$, while over $\mathbb{Z}$, the solutions form a discrete subset of this plane.

Assume that the Diophantine system has a solution. This determines eight triplets, one for each equation. By commutativity, each triplet gives rise to $3! = 6$ equivalent permutations. For example, the first row of \autoref{fig:3by3_letters} corresponds to the six points $(a,b,c)$, $(a,c,b)$, $(b,c,a)$, $(c,b,a)$, $(c,a,b)$, and $(b,a,c)$. The same holds for every row, column, and diagonal of the magic square.

To analyze the geometry of these triplets, we develop a sequence of lemmas describing the structure of the solution set. This yields a geometric proof of \autoref{pattern_theorem}, corresponding to property $P1$ in \cite{Robertson01101996}, where it was originally proved algebraically. Unlike that algebraic approach, however, the geometric framework extends naturally to \autoref{theorem_weighted}, which treats more general $3 \times 3$ magic-square systems with weighted entries, that is, systems in which the variables are multiplied by fixed constants. The strategy remains to detect periodic patterns corresponding to solutions of the generalized system within a prescribed marked set, i.e., the marked set, which is a subset of candidate integers.

We begin with basic definitions and lemmas that fix notation and reveal the geometry of the solution set. Let $\mathcal{L}^{[x,k]}$ denote the intersection of the plane $x+y+z=M$ with the plane $x=k$, where $k\in\mathbb{N}$. The lines $\mathcal{L}^{[y,k]}$ and $\mathcal{L}^{[z,k]}$ are defined analogously. The next lemma states that two points sharing a coordinate lie on the same such line.

\begin{lemma}\label{two_points_common_line}

Two points $(\alpha,\beta,\gamma)$ and $(\delta,\varepsilon,\zeta)$ lie on the same line $\mathcal{L}^{[x,k]}$ if and only if $\alpha=\delta=k$. Analogously:

\begin{itemize}

\renewcommand\labelitemi{$\bullet$}

\item $\beta=\varepsilon=k$ if and only if the points lie on $\mathcal{L}^{[y,k]}$,

\item $\gamma=\zeta=k$ if and only if the points lie on $\mathcal{L}^{[z,k]}$.

\end{itemize}

\end{lemma}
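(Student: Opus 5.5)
The plan is to unfold the definition of $\mathcal{L}^{[x,k]}$ and check both directions directly. The standing assumption, implicit in the setup, is that both points $(\alpha,\beta,\gamma)$ and $(\delta,\varepsilon,\zeta)$ lie on the plane $x+y+z=M$, since they come from triplets of a solution of system~\eqref{dioph_system}. By definition,
\[
\mathcal{L}^{[x,k]}=\{(x,y,z): x+y+z=M,\ x=k\},
\]
so membership of a point in $\mathcal{L}^{[x,k]}$ is equivalent to the conjunction of the plane equation and the condition on the first coordinate.

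For the forward direction, I will suppose both points lie on $\mathcal{L}^{[x,k]}$. Membership forces the first coordinate of each point to equal $k$, so $\alpha=k$ and $\delta=k$, hence $\alpha=\delta=k$.

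For the converse, I will suppose $\alpha=\delta=k$. Each point already satisfies $x+y+z=M$ by the standing assumption, and each has first coordinate $k$. So each point lies in the intersection of the two planes, which is $\mathcal{L}^{[x,k]}$. It is worth recording here that this intersection really is a line, since the normals $(1,1,1)$ and $(1,0,0)$ are not parallel. Over $\mathbb{Z}$ the points are the integer points of that line, parametrized as $(k,t,M-k-t)$.

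The cases $\mathcal{L}^{[y,k]}$ and $\mathcal{L}^{[z,k]}$ follow by the same argument with the roles of the coordinates permuted. Alternatively, one can apply the coordinate permutation that swaps $x$ with $y$ (respectively $z$). This permutation preserves the plane $x+y+z=M$ and carries $\mathcal{L}^{[x,k]}$ to $\mathcal{L}^{[y,k]}$ (respectively $\mathcal{L}^{[z,k]}$).

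There is essentially no obstacle here. The only point needing care is to state explicitly the hypothesis that both points lie on the plane $x+y+z=M$. Without it, the ``if'' direction fails: a point with $\alpha=k$ but not on the plane lies in the plane $x=k$ yet not on $\mathcal{L}^{[x,k]}$.
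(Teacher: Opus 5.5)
Your proof is correct and follows the same approach as the paper, which simply states that the lemma follows directly from the definition of $\mathcal{L}^{[j,k]}$. Your explicit remark that both points must lie on the plane $x+y+z=M$ for the ``if'' direction to hold is a useful clarification of a hypothesis the paper leaves implicit.
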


\begin{proof}

Follows directly from the definition of $\mathcal{L}^{[j,k]}$.

\end{proof}

Two lines in the plane $x+y+z=M$ with the same fixed coordinate, such as $\mathcal{L}^{[x,k_1]}$ and $\mathcal{L}^{[x,k_2]}$, will be called parallel; they are also parallel in the usual geometric sense. The next lemma shows that pairs of points sharing a common coordinate determine lines that belong to one of three families of parallel lines.

\begin{lemma}\label{parallels}

Let $p_1=(\alpha,\beta,\gamma)$, $p_2=(\delta,\varepsilon,\zeta)$, $p_3=(\eta,\theta,\iota)$, and $p_4=(\kappa,\lambda,\mu)$ be points in the plane $x+y+z=M$. If $\alpha=\delta=k_1$ and $\eta=\kappa=k_2$, then the pairs $(p_1,p_2)$ and $(p_3,p_4)$ lie on the parallel lines $\mathcal{L}^{[x,k_1]}$ and $\mathcal{L}^{[x,k_2]}$, respectively. Similarly:

\begin{itemize}

\renewcommand\labelitemi{$\bullet$}

\item if $\beta=\varepsilon=k_1$ and $\theta=\lambda=k_2$, then $(p_1,p_2)$ and $(p_3,p_4)$ lie on $\mathcal{L}^{[y,k_1]}$ and $\mathcal{L}^{[y,k_2]}$,

\item if $\gamma=\zeta=k_1$ and $\iota=\mu=k_2$, then $(p_1,p_2)$ and $(p_3,p_4)$ lie on $\mathcal{L}^{[z,k_1]}$ and $\mathcal{L}^{[z,k_2]}$.

\end{itemize}

\end{lemma}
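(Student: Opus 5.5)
The statement is essentially a bookkeeping consequence of \autoref{two_points_common_line}, so the plan is to reduce each clause to that lemma and then invoke the definition of parallel lines given just before the statement. I will treat the $x$-coordinate case in full; the $y$- and $z$-cases follow verbatim by relabeling coordinates.

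First, since all four points lie in the plane $x+y+z=M$ and $\alpha=\delta=k_1$, \autoref{two_points_common_line} gives that $p_1$ and $p_2$ both lie on $\mathcal{L}^{[x,k_1]}$. Concretely, $\mathcal{L}^{[x,k_1]}$ is the intersection of the plane with $\{x=k_1\}$, and both points satisfy both defining equations. Likewise $\eta=\kappa=k_2$ places $p_3,p_4$ on $\mathcal{L}^{[x,k_2]}$.

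Second, I will justify that these two lines are parallel. By the convention stated before the lemma, lines with the same fixed coordinate are called parallel, so this is immediate. For completeness I would also check the geometric sense. Each $\mathcal{L}^{[x,k]}$ is parametrized as $(k,t,M-k-t)$, $t\in\mathbb{R}$, with direction vector $(0,1,-1)$, independent of $k$. Hence the two lines share a direction. If $k_1\neq k_2$ they are disjoint, since their $x$-coordinates differ, and so they are genuinely parallel. If $k_1=k_2$ they coincide, which is the degenerate case of parallelism. The analogous directions for the $y$- and $z$-families are $(1,0,-1)$ and $(1,-1,0)$.

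There is no real obstacle. The only point needing care is the degenerate case $k_1=k_2$, where the two lines coincide, and I will note explicitly that this is admitted under the stated notion of parallelism.
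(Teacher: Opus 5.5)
Your proposal is correct and follows the paper's approach: the paper simply calls the lemma an immediate consequence of \autoref{two_points_common_line}, combined with the stated convention that lines with the same fixed coordinate are called parallel. Your direction-vector check ($(0,1,-1)$, $(1,0,-1)$, $(1,-1,0)$) and your note on the coinciding case $k_1=k_2$ are sound details that the paper leaves implicit.
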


\begin{proof}

This is an immediate consequence of \autoref{two_points_common_line}.

\end{proof}

\subsection{Solutions are hexagons}

For $i,j \in \{1,2,3\}$, let $\mathcal{T}_{(i,j)}$ denote the transposition operator that swaps coordinates $i$ and $j$ of each point in a set. For example,
$\mathcal{T}_{(1,2)}\big( \{ (4,7,3), (5,2,9) \} \big)=\{ (7,4,3), (2,5,9) \}$.

The points determined by the entries of a $3\times3$ magic square have a hexagonal structure. By studying coordinate transpositions, we show that these points form regular hexagons in the plane $x+y+z=M$.

\begin{lemma}\label{triangles}
If the lines $\mathcal{L}^{[x,k_1]}$, $\mathcal{L}^{[y,k_2]}$, and $\mathcal{L}^{[z,k_3]}$ do not intersect at a single point, then their pairwise intersections form an equilateral triangle. This symmetry is responsible for the hexagonal structure of the solution set.
\end{lemma}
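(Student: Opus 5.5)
Compute the three pairwise intersection points explicitly and then compare squared Euclidean side lengths. Inside the plane $x+y+z=M$, the line $\mathcal{L}^{[x,k_1]}$ is $\{(k_1,y,z): y+z=M-k_1\}$, and similarly for the other two. The intersection of $\mathcal{L}^{[x,k_1]}$ and $\mathcal{L}^{[y,k_2]}$ fixes two coordinates, so the third is forced:
\[
P_{12}=(k_1,\,k_2,\,M-k_1-k_2).
\]
In the same way,
\[
P_{13}=(k_1,\,M-k_1-k_3,\,k_3),\qquad P_{23}=(M-k_2-k_3,\,k_2,\,k_3).
\]
Each pair of these lines meets in exactly one point. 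By \autoref{parallels}, lines from different families are never parallel, since their direction vectors $(0,1,-1)$, $(1,0,-1)$, $(1,-1,0)$ are pairwise non-proportional.

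The key quantity is the defect $\delta := k_1+k_2+k_3-M$. The three points coincide exactly when $\delta=0$. For example, $P_{12}=P_{13}$ requires $k_2=M-k_1-k_3$, i.e.\ $\delta=0$, and the same condition appears for the other pairs. So the hypothesis that the three lines do not meet in a single point is equivalent to $\delta\neq 0$. I would also remark that under this hypothesis the three points are pairwise distinct and hence genuinely form a triangle, because any single coincidence already forces $\delta=0$.

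The side vectors follow directly:
\[
P_{12}-P_{13}=(0,\,k_1+k_2+k_3-M,\,M-k_1-k_2-k_3)=\delta(0,1,-1),
\]
\[
P_{12}-P_{23}=\delta(1,0,-1),\qquad P_{13}-P_{23}=\delta(1,-1,0).
\]
Each side therefore has squared length $2\delta^2$, giving side length $\sqrt{2}\,|\delta|$, so the triangle is equilateral. As a cross-check, the inner product of any two of the direction vectors $(0,1,-1)$, $(1,0,-1)$, $(1,-1,0)$ equals $\pm1$, which corresponds to angles of $60^\circ$ or $120^\circ$. This agrees with the three families of lines forming the standard triangular lattice in the plane.

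There is no real obstacle here; the only care needed is the sign bookkeeping in the side vectors, so that each difference factors cleanly as $\delta$ times a fixed direction. The second sentence of the lemma, about the hexagonal structure, is interpretive and not proved at this point. I would note that the $60^\circ$ angles between the three families are what later produce the regular hexagons formed by permuted triplets.
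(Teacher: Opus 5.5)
Your proof is correct, but it takes a different route from the paper's.

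The paper argues synthetically. It first asserts that for a single constant $s\neq M/3$ the lines $\mathcal{L}^{[x,s]}$, $\mathcal{L}^{[y,s]}$, $\mathcal{L}^{[z,s]}$ bound an equilateral triangle. It then notes, via \autoref{parallels}, that $\mathcal{L}^{[x,k_1]}$, $\mathcal{L}^{[y,k_2]}$, $\mathcal{L}^{[z,k_3]}$ are parallel to these three lines, so the general triangle has sides in the same three directions and is therefore equilateral too. Two steps are left implicit there:
\begin{itemize}
\item why the symmetric triangle is equilateral (coordinate permutations are isometries of the plane that permute its vertices);
\item why sides parallel to those of an equilateral triangle force equilaterality (interior angles of $60^\circ$ or $120^\circ$ summing to $180^\circ$ must all be $60^\circ$).
\end{itemize}

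Your direct computation bypasses both steps. Factoring each side vector as $\delta$ times one of $(0,1,-1)$, $(1,0,-1)$, $(1,-1,0)$ settles equilaterality at once. It also yields more than the paper's argument: the explicit side length $\sqrt{2}\,|\delta|$, and nondegeneracy of the triangle. It gives an exact equivalence between the non-concurrency hypothesis and $\delta\neq 0$, which matches the paper's condition $s\neq M/3$, i.e.\ $3s-M\neq 0$. What the paper's version emphasizes instead is the ``three families of parallel lines'' picture, which it reuses in the hexagon constructions.

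One small citation slip: \autoref{parallels} only says that lines within one family are parallel. The fact that lines from different families are not parallel comes from your direction-vector check, not from that lemma. It is redundant anyway, since you exhibit the intersection points explicitly.
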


We now show that every solution of the magic-square system contains a hexagon. More precisely, from any solution, one can select seven triplets forming a regular hexagon together with its barycenter. These triplets are a subset of all those associated with the solution and need not include one representative from each row, column, and diagonal. One such choice is
$$(g,a,d),\ (f,e,d),\ (f,i,c),\ (g,i,h),\ (b,e,h),\ (b,a,c),\ (g,e,c).$$
These points form a regular hexagon in the plane $x+y+z=M$, with barycenter at $(g,e,c)$.

The argument is illustrated in \autoref{fig:hexagon}. Points lying on the green, red, and blue lines share their first, second, and third coordinates, respectively, as described in \autoref{parallels}. The three colors, therefore, represent the three families of parallel lines. The hexagon is then constructed triangle by triangle using \autoref{parallels} and \autoref{triangles}.

\begin{figure}[ht]
\centering
\resizebox{0.3\linewidth}{!}{%
\begin{tikzpicture}[line cap=round,line join=round,>=triangle 45,x=1cm,y=1cm,scale=0.3]
\definecolor{qqqqff}{rgb}{0,0,1}
\definecolor{ccqqqq}{rgb}{0.8,0,0}
\definecolor{qqwuqq}{rgb}{0,0.39215686274509803}
\definecolor{wwwwww}{rgb}{0.4,0.4,0.4}
\definecolor{uuuuuu}{rgb}{0.26666666666666666,0.26666666666666666,0.26666666666666666}

\clip(-5.58, 2.74) rectangle (6.62, 13.10);

\draw [line width=1.5pt,dash pattern=on 1pt off 1pt on 1pt off 4pt,color=qqwuqq] (-1.9798932449803204,11.995338979376367)-- (3.0201067550196794,3.335084941531976);
\draw [line width=1.5pt,color=ccqqqq] (3.02010675501968,11.995338979376367)-- (-1.979893244980322,3.335084941531976);
\draw [line width=1.5pt,dotted,color=qqqqff] (-4.479893244980323,7.665211960454174)-- (5.520106755019681,7.665211960454171);
\draw [line width=1.5pt,dotted,color=qqqqff] (-1.9798932449803204,11.995338979376367)-- (3.02010675501968,11.995338979376367);
\draw [line width=1.5pt,dash pattern=on 1pt off 1pt on 1pt off 4pt,color=qqwuqq] (3.02010675501968,11.995338979376367)-- (5.520106755019681,7.665211960454171);
\draw [line width=1.5pt,color=ccqqqq] (5.520106755019681,7.665211960454171)-- (3.0201067550196794,3.335084941531976);
\draw [line width=1.5pt,dotted,color=qqqqff] (-1.979893244980322,3.335084941531976)-- (3.0201067550196794,3.335084941531976);
\draw [line width=1.5pt,dash pattern=on 1pt off 1pt on 1pt off 4pt,color=qqwuqq] (-1.979893244980322,3.335084941531976)-- (-4.479893244980323,7.665211960454174);
\draw [line width=1.5pt,color=ccqqqq] (-4.479893244980323,7.665211960454174)-- (-1.9798932449803204,11.995338979376367);

\begin{scriptsize}
\draw [fill=black] (-1.979893244980322,3.335084941531976) circle (2pt);
\draw[color=black] (-.5784847934616513,4.047948103272335) node {$beh$};
\draw [fill=black] (3.0201067550196794,3.335084941531976) circle (2pt);
\draw[color=black] (4.4079753862802176,4.047948103272335) node {$gih$};
\draw [fill=uuuuuu] (5.520106755019681,7.665211960454171) circle (2pt);
\draw[color=uuuuuu] (5.913487397776133,8.395748358515261) node {$fic$};
\draw [fill=uuuuuu] (3.02010675501968,11.995338979376367) circle (2pt);
\draw[color=uuuuuu] (3.4079753862802176,12.718984770508227) node {$fed$};
\draw [fill=uuuuuu] (-1.9798932449803204,11.995338979376367) circle (2pt);
\draw[color=uuuuuu] (-1.0784847934616513,12.718984770508227) node {$gad$};
\draw [fill=uuuuuu] (-4.479893244980323,7.665211960454174) circle (2pt);
\draw[color=uuuuuu] (-3.0839968049575655,8.395748358515261) node {$bac$};
\draw [fill=wwwwww] (0.5201067550196792,7.665211960454172) circle (2pt);
\draw[color=wwwwww] (1.7024633747843032,8.395748358515261) node {$gec$};
\end{scriptsize}
\end{tikzpicture}
}
\caption{Hexagonal diagram summarizing the proof that the chosen points form a hexagon. The labels indicate the corresponding triplets.}
\label{fig:hexagon}
\end{figure}

\begin{proposition}\label{proposition1}
Let $p_1,\ldots,p_6$ be the vertices of a regular hexagon with barycenter $p_0$. For any $i,j\in\{1,2,3\}$, the image
$\mathcal{T}_{(i,j)}(\{p_1,\ldots,p_6\})=\{p'_1,\ldots,p'_6\}$
is also a regular hexagon, with barycenter
$\mathcal{T}_{(i,j)}(\{p_0\})=\{p'_0\}$.
\end{proposition}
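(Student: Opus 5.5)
The plan is to view $\mathcal{T}_{(i,j)}$ as a linear map of $\mathbb{R}^3$ and check that it is an isometry preserving the relevant plane. Regularity of a hexagon and the location of its barycenter are both defined by distances and affine combinations, so they will then transfer automatically.

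\emph{Step 1: linearity and isometry.} $\mathcal{T}_{(i,j)}$ is given by a permutation matrix $P$, which is orthogonal. Hence $\|Pu-Pv\|=\|u-v\|$ for all $u,v\in\mathbb{R}^3$, so $\mathcal{T}_{(i,j)}$ preserves all pairwise distances and all angles between difference vectors.

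\emph{Step 2: the plane is preserved.} The linear form $x+y+z$ is symmetric under permuting coordinates. Therefore $P$ maps the plane $x+y+z=M$ onto itself, and the image points $p'_0,\ldots,p'_6$ again lie in that plane. In fact they remain integer points, since coordinates are only permuted. So $\mathcal{T}_{(i,j)}$ restricts to an isometry of the plane: either a reflection in the line $x_i=x_j$ inside the plane, or, viewed in the plane, an orientation-reversing rigid motion.

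\emph{Step 3: regular hexagon and barycenter.} I will characterize a regular hexagon with center $p_0$ by $p_1,\ldots,p_6$ being coplanar with $\|p_k-p_0\|=r$ for all $k$ and $\|p_k-p_{k+1}\|=r$ cyclically (indices mod $6$). Distinctness is also preserved, since $P$ is injective. By Step 1 all these metric relations hold for $p'_k=Pp_k$ with the same $r$, and coplanarity holds by Step 2. Hence $p'_1,\ldots,p'_6$ is a regular hexagon in the same cyclic order, with side length $r$. For the barycenter, $p_0=\tfrac16\sum_k p_k$, and linearity gives $Pp_0=\tfrac16\sum_k Pp_k$, so $p'_0$ is the barycenter of the image.

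\emph{Main obstacle.} The only delicate point is the case $i=j$ and the labeling convention. For $i=j$ the map is the identity and the claim is trivial. For $i\neq j$, the image vertices may appear in reversed cyclic orientation, because the map is a reflection. I will note that regularity does not depend on orientation, so the conclusion holds as a set statement, which is how the proposition is phrased.
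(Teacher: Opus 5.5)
Your proof is correct and follows essentially the same route as the paper: coordinate transpositions are isometries that preserve the plane $x+y+z=M$, so they map regular hexagons to regular hexagons. The only difference is the barycenter step. Your linearity argument, $Pp_0=\frac{1}{6}\sum_{k=1}^{6} Pp_k$, is more direct and more rigorous than the paper's appeal to preserved ``coordinate relations'' between $p_0$ and the vertices.
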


Also, we can already formulate a necessary condition for the existence of a solution to the Diophantine system associated with a $3\times3$ magic square: the magic constant $M$ must be divisible by $3$. 

\begin{lemma}\label{middle=n/3}
In the magic square of \autoref{fig:3by3_letters}, the middle entry $e$ must equal $\frac{M}{3}$, where $M$ is the magic constant.
\end{lemma}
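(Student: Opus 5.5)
The plan is to derive $e=M/3$ directly from the linear system \eqref{dioph_system}, using only the four equations that involve the centre entry $e$. These are the middle row, the middle column and the two diagonals:
\begin{equation*}
x_d+x_e+x_f=M,\quad x_b+x_e+x_h=M,\quad x_a+x_e+x_i=M,\quad x_c+x_e+x_g=M.
\end{equation*}
Adding these four equations counts $e$ four times and every other entry exactly once. This gives
\begin{equation*}
3x_e+\bigl(x_a+x_b+x_c+x_d+x_e+x_f+x_g+x_h+x_i\bigr)=4M .
\end{equation*}

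The second step evaluates the total sum of all nine entries. Adding the three row equations (the first three lines of \eqref{dioph_system}) gives exactly $3M$, since the rows partition the nine cells. Substituting this into the identity above yields $3x_e+3M=4M$, and hence $x_e=M/3$. Since $x_e$ is an integer, it follows that $3\mid M$, which is the necessary condition announced just before the lemma.

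Alternatively, the result can be read off the geometric picture. By \autoref{proposition1} and the hexagon in \autoref{fig:hexagon}, the barycenter $(g,e,c)$ and its transposed images are centres of the hexagons. One could use this symmetry to show that $(e,e,e)$ lies on the plane $x+y+z=M$. However, this route is more roundabout than the algebraic one, so I would present the algebraic argument.

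There is no real obstacle here. The only point needing care is to check that each non-central cell appears exactly once among the four lines through the centre, while $e$ appears in all four. Pairing opposite cells through the centre, namely $d/f$, $b/h$, $a/i$ and $c/g$, makes this immediate.
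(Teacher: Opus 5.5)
Your proof is correct, but it follows a different route from the paper's.

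The paper writes the system in matrix form and row-reduces the $8\times 9$ coefficient matrix. This gives a two-dimensional nullspace spanned by two vectors whose $x_e$-component is $0$. Adding the constant particular solution $(M/3,\dots,M/3)$ then shows that every real solution, and hence every integer solution, has $x_e=M/3$.

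You instead exhibit an explicit combination of the constraints. Once the middle row cancels, it reads
\[
(x_b+x_e+x_h)+(x_a+x_e+x_i)+(x_c+x_e+x_g)-(x_a+x_b+x_c)-(x_g+x_h+x_i)=3x_e,
\]
with right-hand side $3M-2M=M$.

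The two arguments are dual. The paper shows that the $x_e$-coordinate functional vanishes on the kernel, while you show directly that it lies in the row space.

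\textbf{What your route buys.} It is shorter and needs no Gaussian elimination. There is no risk of a slip in the echelon form, and no need to know that the rank is $7$. It also gives the identity $3x_e=M$ directly over $\mathbb{Z}$, and hence $3\mid M$.

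\textbf{What the paper's route buys.} It yields a parametrization of the whole solution set: the constant square $M/3$ plus a two-dimensional space of magic squares with magic constant $0$. This is essentially the algebraic content behind the periodic pattern of \autoref{pattern_theorem}.

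You were also right to drop the geometric aside. The paper proves \autoref{trace_of_diag} using this very lemma, so deriving $x_e=M/3$ from the hexagon picture would risk circularity.
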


\subsection{A hexagon of hexagons}

Let $\mathscr{T}$ denote the set of all transposition operators $\mathcal{T}_{(i,j)}$. The trajectory of a triplet under $\mathscr{T}$ is the set of points obtained by applying all operators in $\mathscr{T}$ to that triplet.

\begin{lemma}\label{trace_of_diag}
The trajectory under $\mathscr{T}$ of a triplet corresponding to a diagonal of the magic square in \autoref{fig:3by3_letters} forms a hexagon.
\end{lemma}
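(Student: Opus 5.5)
The plan is a direct computation. Take a diagonal triplet, say $(a,e,i)$; the anti-diagonal $(c,e,g)$ is handled identically. By \autoref{middle=n/3} we have $e=M/3$, and the diagonal equation $a+e+i=M$ then gives $a+i=2M/3$. Write $m=M/3$ and $a=m+t$, so that $i=m-t$ with $t\neq 0$, since entries are distinct. The triplet is therefore $(m+t,\,m,\,m-t)$.

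Next we list its trajectory under $\mathscr{T}$, closed under repeated transpositions, which gives all $3!=6$ permutations:
\[
(m+t,m,m-t),\ (m,m+t,m-t),\ (m-t,m,m+t),\ (m+t,m-t,m),\ (m,m-t,m+t),\ (m-t,m+t,m).
\]
All of them lie in the plane $x+y+z=M$. They are pairwise distinct because $t\neq0$, and they are centred at the point $(m,m,m)$, which is the centre of the square and fixed by every $\mathcal{T}_{(i,j)}$.

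To show these six points form a regular hexagon, we compute their vectors from $(m,m,m)$. These are the six permutations of $(t,0,-t)$, each of squared length $2t^2$. Consecutive vectors, ordered cyclically as $(t,0,-t),(t,-t,0),(0,-t,t),(-t,0,t),(-t,t,0),(0,t,-t)$, have inner product $t^2$, which corresponds to an angle of $60^\circ$. Opposite vectors are negatives of each other. Hence the six points are the vertices of a regular hexagon with barycenter $(m,m,m)$.

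Equivalently, one can argue geometrically via \autoref{two_points_common_line} and \autoref{triangles}. Consecutive vertices share one coordinate (value $m\pm t$ or $m$), so the sides lie on lines of the three parallel families, and the triangle lemma gives the equal sides. The only point needing care is the distinctness and nondegeneracy, which requires $t\neq0$. This is guaranteed by the distinct-entries condition, so I expect no real obstacle. Consistently with \autoref{proposition1}, applying any $\mathcal{T}_{(i,j)}$ merely permutes these vertices.
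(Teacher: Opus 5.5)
Your proof is correct, but it takes a different route from the paper. The paper never parametrizes the triplet. It notes that coordinate permutations preserve the distance to $j=(M/3,M/3,M/3)$, so the trajectory lies on a circle in the plane $x+y+z=M$ centred at $j$. It then uses \autoref{middle=n/3} only to say that each trajectory point has a coordinate equal to $M/3$, and hence lies on one of the three lines $\mathcal{L}^{[x,M/3]},\mathcal{L}^{[y,M/3]},\mathcal{L}^{[z,M/3]}$ through $j$. These lines meet at $60^\circ$, which is what the appeal to \autoref{triangles} supplies, so the six points are the antipodal pairs cut from the circle by three lines through its centre, i.e.\ a regular hexagon.

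Your parametrization $(m+t,m,m-t)$ and the inner-product check verify the equal radii and the $60^\circ$ consecutive angles directly. This is more self-contained, since it does not rely on the rather informal \autoref{triangles}, and it makes the nondegeneracy $t\neq0$ explicit. The paper's version, in turn, shows more transparently why the diagonal matters: the circle part holds for any triplet, and only the $M/3$ coordinate pins the vertices to the three lines through the centre.

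For the same reason, your closing ``equivalently'' remark is not a proof on its own. Having consecutive vertices share a coordinate, so that the sides lie on the three parallel families, holds for the permutation orbit of \emph{any} triplet of distinct entries. For example, the orbit of $(0,1,5)$ is a hexagon with alternating side lengths $\sqrt2$ and $4\sqrt2$. It is the vertices, not the sides, lying on lines through $(m,m,m)$ that forces regularity. Your computation already captures this, so the main argument stands.
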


By \autoref{trace_of_diag}, applying the transposition operators to a point $(\alpha,\beta,\gamma)$ whose coordinates correspond to the entries of a diagonal of the magic square produces a hexagon. By \autoref{proposition1}, the same therefore holds for the barycenters of the associated hexagons.

More precisely, \autoref{proposition1} associates a hexagon with any point $p$ whose coordinates are given by the entries of a diagonal of the magic square. Starting from the hexagon of \autoref{proposition1}, one applies a suitable transposition operator $\mathcal{T}_{(i,j)}$ so that the barycenter of the resulting hexagon coincides with $p$.

By \autoref{two_points_common_line} and \autoref{parallels}, these hexagons preserve the same orientation in the plane, as illustrated in \autoref{fig:hyperhex}.

\begin{figure}[ht]
    \centering
    \includegraphics[width=0.25\linewidth]{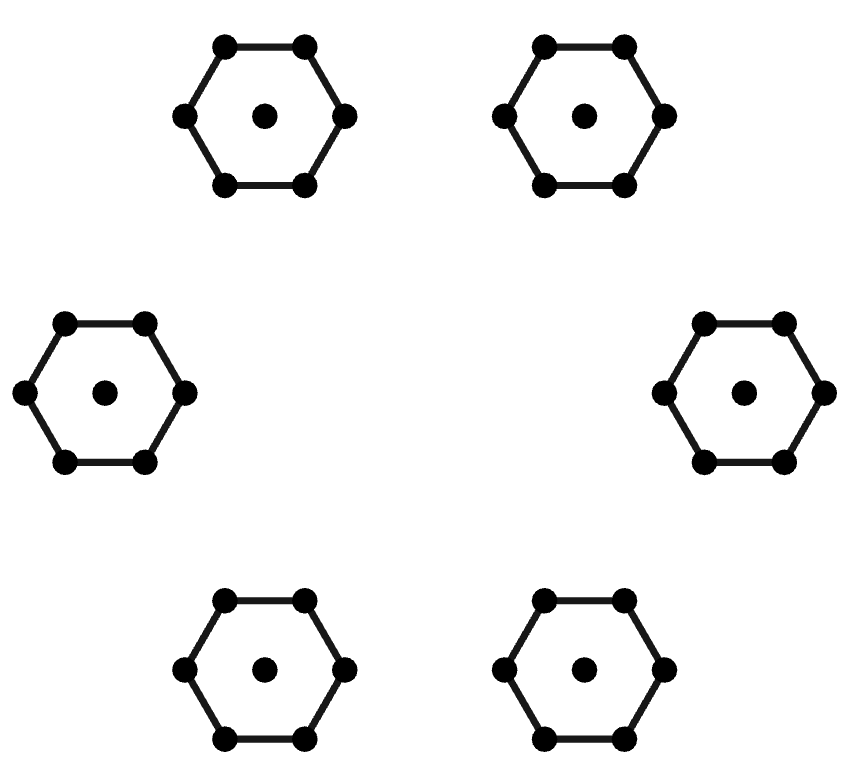}
    \caption{The points obtained by applying coordinate transpositions to the set of points $p_1,p_2,\ldots,p_7$ forming a hexagon with barycenter in the plane $x+y+z=M$, as in \autoref{fig:hexagon}.}
    \label{fig:hyperhex}
\end{figure}

\subsection{The missing points}

So far, we have analyzed $42$ of the $48$ points arising from the $8$ triplets of a $3\times3$ magic square, as summarized in \autoref{fig:hyperhex}. Indeed, each of the $8$ triplets gives rise to $3!=6$ permutations, yielding $48$ points in total. We now show that the remaining $6$ points are not arbitrary points of the plane $x+y+z=M$, but are uniquely determined by the $42$ points already identified.

For every triplet coming from a row, column, or diagonal of \autoref{fig:3by3_letters}, there is another triplet sharing its $j$th coordinate, for each $j\in\{1,2,3\}$, while differing in the other two coordinates. Therefore, when the $48$ points are connected by lines joining points that share a coordinate, each valid point lies on exactly three such lines, by \autoref{two_points_common_line}.

Applying this to the configuration of \autoref{fig:hyperhex} reveals the locations of the remaining $6$ points, shown in \autoref{fig:pattern all lines}. Hence, these points are uniquely determined by the $42$ points already analyzed.

If one considers all intersections of triples of such lines, one obtains $7$ points rather than $6$. One of them must therefore be invalid. The central point, highlighted in red in \autoref{fig:pattern all lines}, is invalid because all three of its coordinates are equal. This also follows from \autoref{trace_of_diag}, since the red point is the barycenter of the corresponding trajectory.

\begin{figure}[ht]
    \centering
    \includegraphics[width=0.3\linewidth]{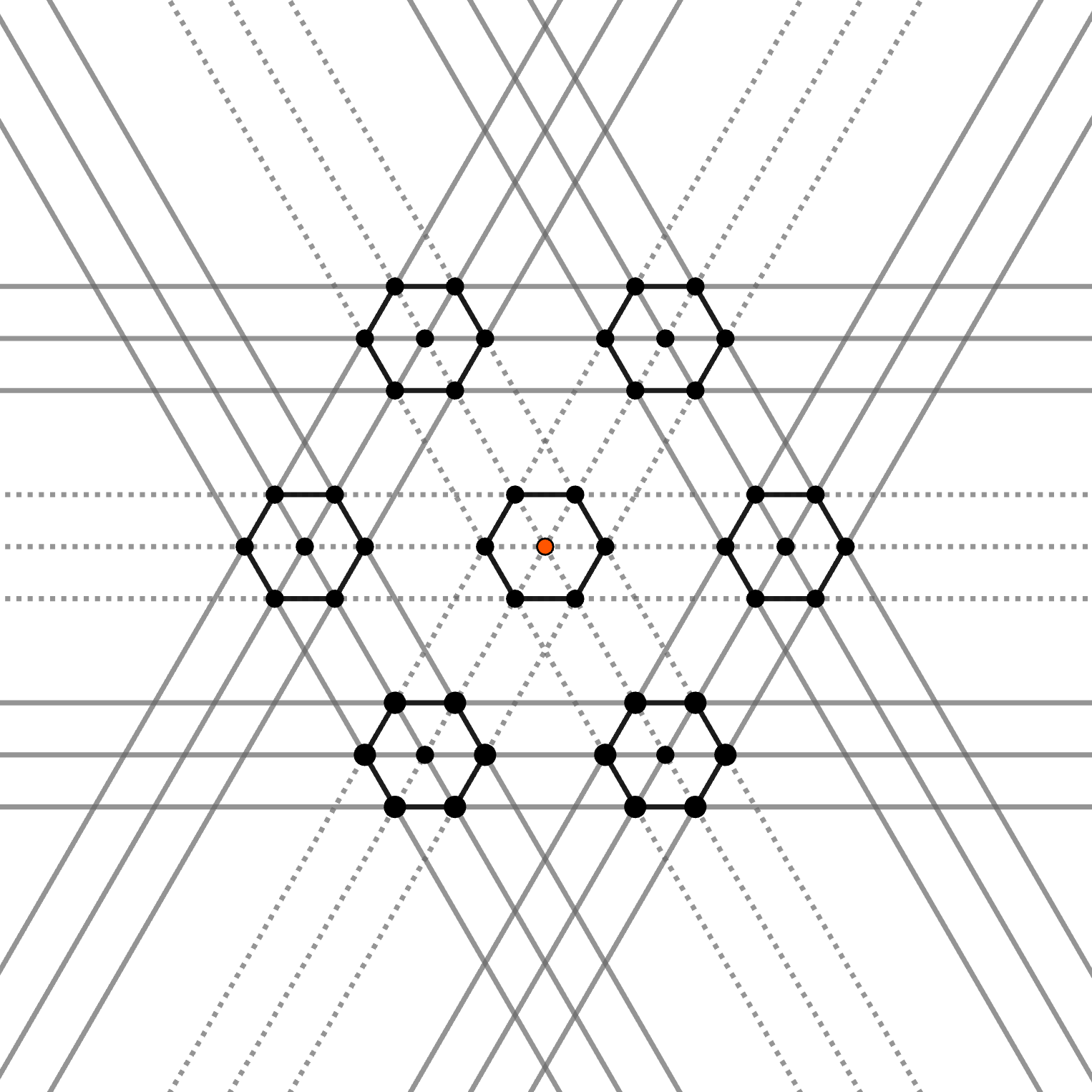}
    \caption{The full configuration of the $48$ triplets associated with \autoref{fig:3by3_letters} is shown by the black points. The central small hexagon consists of the remaining $6$ points, whose locations are uniquely determined by the $42$ points displayed in \autoref{fig:hyperhex}. The red point is invalid, since its coordinates are all equal.}
    \label{fig:pattern all lines}
\end{figure}

\subsection{Solutions form a periodic pattern}\label{sol_to_per}

The next lemma, \autoref{folklore_hexagon}, is a standard geometric fact (see, e.g., \cite{CL91}) that allows us to project points from the plane $x+y+z=M$ onto the planes $x=0$, $y=0$, and $z=0$. These projections, which we call \emph{shadows}, reveal the geometry of the configuration on the coordinate planes. Projecting further onto the coordinate axes then relates the geometry in the plane $x+y+z=M$ to arithmetic patterns on the axes. Since the points arise from a solution of \autoref{fig:3by3_letters}, their shadows have distinct integer coordinates.

\begin{lemma}[folklore]\label{folklore_hexagon}
A cube can be intersected by a plane passing through its barycenter and the midpoints of six edges. This intersection is a regular hexagon. If one vertex of the cube is placed at the origin and its three incident edges lie along the positive $x$-, $y$-, and $z$-axes, then the intersecting plane is parallel to the family of planes $x+y+z=n$, where $n\in\mathbb{N}$.
\end{lemma}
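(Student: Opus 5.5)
Set the cube to be $[0,s]^3$ with $s>0$, with one vertex at the origin and its three incident edges along the positive $x$-, $y$-, and $z$-axes. The approach is a direct coordinate computation. Its barycenter is $c=(s/2,s/2,s/2)$. Any plane through $c$ parallel to the family $x+y+z=n$ must be $x+y+z=3s/2$. So I will study this specific plane. Once the cross-section is shown to be a regular hexagon, the parallelism claim is automatic: the normal of the plane is $(1,1,1)$, exactly the normal of every plane $x+y+z=n$.

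\textbf{Step 1: the vertices.} I will compute where the plane meets the twelve edges. Each edge fixes two coordinates in $\{0,s\}$ and lets the third range over $[0,s]$. Write $t$ for the free coordinate and $m\in\{0,s,2s\}$ for the sum of the fixed pair. The edge meets the plane where $t=3s/2-m$, so there is an intersection point inside the edge only when $m=s$, and then $t=s/2$. The fixed pair is then $\{0,s\}$ in some order. This gives exactly the six edges whose fixed coordinates are one $0$ and one $s$. The plane meets each of them at its midpoint, namely the points obtained by permuting the coordinates of $(0,s/2,s)$.

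The other six edges have $m=0$ or $m=2s$, so they miss the plane. No cube vertex lies on the plane either, since vertex coordinate sums lie in $\{0,s,2s,3s\}$. Since the plane passes through the interior point $c$ and meets the boundary only at these six points, the cross-section is the convex hull of the six midpoints.

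\textbf{Step 2: regularity.} Order the six permutations of $(0,s/2,s)$ cyclically so that consecutive ones differ by a single adjacent transposition. One such order is $(0,s/2,s)$, $(s/2,0,s)$, $(s,0,s/2)$, $(s,s/2,0)$, $(s/2,s,0)$, $(0,s,s/2)$. Each consecutive difference is a permutation of $(s/2,-s/2,0)$, so every side has length $s/\sqrt2$. Each vertex differs from $c$ by a permutation of $(-s/2,0,s/2)$, so all six lie at distance $s/\sqrt2$ from $c$.

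A planar hexagon inscribed in a circle with all sides equal to the radius has each side subtending a central angle of $60^\circ$. It is therefore a regular hexagon centered at $c$. Convexity and the cyclic order are consistent with this: the six points are the images of one vertex under the order-six rotation about the $(1,1,1)$ axis composed with a coordinate cycle. I will simply check this against the listed order.

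\textbf{Main obstacle.} The only delicate point is Step 1. We must argue that the intersection polygon has exactly these six vertices, with no cube vertices and no other edge points, and that they appear in the stated cyclic order. The edge-sum case analysis above handles both issues. Everything else is routine arithmetic. This is a folklore fact (cf.\ \cite{CL91}), so a short verification of this kind suffices.
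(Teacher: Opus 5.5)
Your proof is correct. It takes a genuinely different, and more complete, route than the paper, which gives no argument for this lemma: it only calls it a standard fact, cites \cite{CL91}, and shows the picture in Figure~\ref{fig:hexagon_in_cube}.

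Your computation on $[0,s]^3$ makes the statement self-contained. It also produces something the citation does not: the vertex set is exactly the six permutations of $(0,s/2,s)$, which is what the paper needs downstream. With these coordinates, \autoref{corollary_cube_projection} becomes a one-line check. Dropping the $z$-coordinate gives the corners $(s,0),(0,s)$, the four side midpoints, and the centre $(s/2,s/2)$; projecting to an axis gives $0,s/2,s$. Your Step~2 also applies verbatim to the permutations of any triple $(u-d,u,u+d)$: take $s=2d$ and translate by $(u-d)(1,1,1)$. That is precisely the diagonal-trajectory situation of \autoref{trace_of_diag} with $u=M/3$, so your computation could replace the informal appeal to \autoref{triangles} there. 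The citation buys only brevity.

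There are three small points to tidy.

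First, the plane does not meet the boundary of the cube only at six points; it cuts each face in a segment, and these segments are the hexagon's sides. What you mean is that it meets the \emph{edges} only there. The conclusion then follows from the standard fact that each vertex of a plane section of a convex polytope is either a polytope vertex lying on the plane or the intersection of the plane with an edge.

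Second, the cube has no order-six rotation about the $(1,1,1)$ axis. The order-six symmetry you have in mind is $(x,y,z)\mapsto(s-z,s-x,s-y)$, central inversion in $c$ composed with a coordinate $3$-cycle. It acts on the plane as a $60^\circ$ rotation and steps backwards through your list. You do not need it anyway: six distinct points on the circle whose consecutive central angles are $\pm 60^\circ$ must occupy all six positions spaced $60^\circ$ apart.

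Third, your existential reading of the last sentence is the right one. With a vertex at the origin there are four such planes; for instance, $x+y-z=s/2$ also contains $c$ and six edge midpoints. Only the one perpendicular to the body diagonal through the origin is parallel to $x+y+z=n$.
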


\autoref{fig:hexagon_in_cube} illustrates this fact: the marked edge midpoints lie in a plane whose intersection with the cube is a regular hexagon.

\begin{figure}[ht]
    \centering
    \includegraphics[width=0.3\linewidth]{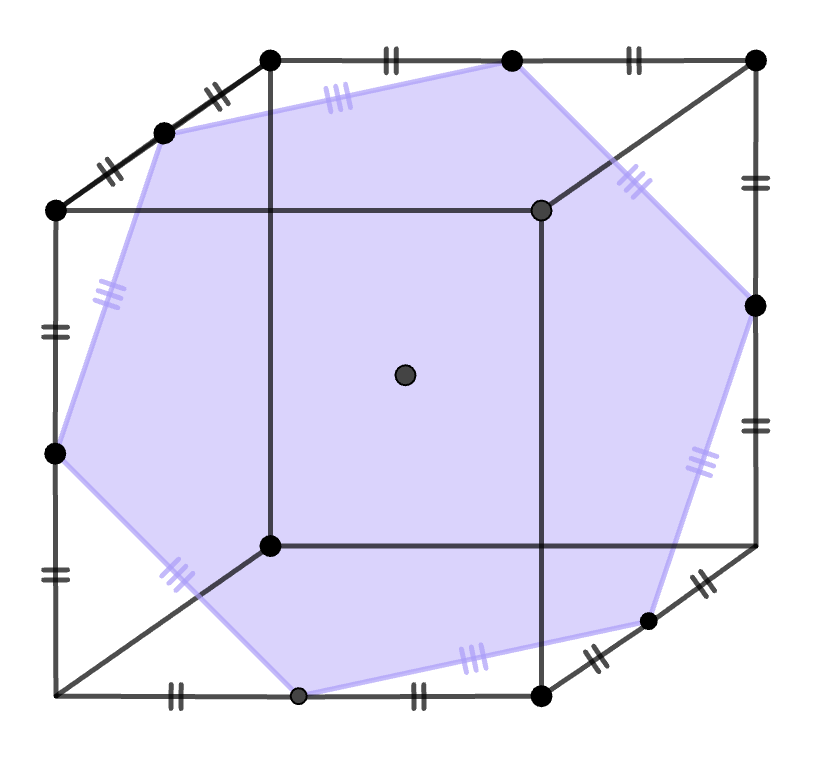}
    \caption{A cube intersected by a plane whose cross-section is a regular hexagon. If one vertex of the cube is at the origin $(0,0,0)$ and the three incident edges lie along the positive $x$-, $y$-, and $z$-axes, then the slicing plane has the form $x+y+z=r$, with $r\in\mathbb{R}$.}
    \label{fig:hexagon_in_cube}
\end{figure}

\begin{corollary}\label{corollary_cube_projection}
Projecting the vertices of the hexagon in \autoref{folklore_hexagon}, together with its barycenter, onto a face of the cube produces two opposite vertices of a square, the midpoints of its sides, and the square's barycenter. Projecting this shadow further onto an edge of the cube yields three points in arithmetic progression.
\end{corollary}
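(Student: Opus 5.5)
The plan is to place the cube concretely and compute. Take the cube $[0,s]^3$ with one vertex at the origin and its three incident edges along the positive axes. By \autoref{folklore_hexagon}, the plane through the barycenter $(s/2,s/2,s/2)$ parallel to $x+y+z=n$ is $x+y+z=3s/2$. The six edge midpoints it contains are exactly the points with one coordinate $0$, one coordinate $s$ and one coordinate $s/2$, namely all permutations of $(0,s/2,s)$. These six points are the vertices of the regular hexagon, and its barycenter is $(s/2,s/2,s/2)$. Since projection onto a face and onto an edge is orthogonal, and since the statement is invariant under the cube's symmetries permuting the axes, it suffices to handle the face $z=0$ and the edge on the $x$-axis.

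Next, project onto the face $z=0$ by forgetting the third coordinate. The six vertices go to $(s/2,s),(s,s/2),(0,s/2),(s/2,0),(0,s),(s,0)$, and the barycenter goes to $(s/2,s/2)$. Of these, $(0,s)$ and $(s,0)$ are two opposite vertices of the face square $[0,s]^2$. The points $(s/2,0),(s,s/2),(s/2,s),(0,s/2)$ are the midpoints of its four sides, and $(s/2,s/2)$ is its barycenter. This matches the claimed shadow.

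Then project further onto the $x$-axis edge. All seven points land in $\{0,s/2,s\}$, and all three values occur. These are three points in arithmetic progression with common difference $s/2$, as claimed. The same computation applies to any other face, and to either of that face's two edges, by symmetry.

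The main obstacle is making sure that ``the'' hexagon of \autoref{folklore_hexagon} really is the one with vertices the permutations of $(0,s/2,s)$. To settle this, I would check directly which of the twelve edge midpoints satisfy $x+y+z=3s/2$. Every edge midpoint has two coordinates in $\{0,s\}$ and one equal to $s/2$, so it lies in the plane exactly when those two coordinates are $0$ and $s$. That leaves precisely the six permutations of $(0,s/2,s)$. Once this is established, everything else is routine coordinate bookkeeping.
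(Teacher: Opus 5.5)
Your proposal is correct and is essentially the paper's argument. The paper simply points to the left panel of \autoref{fig:cube_projection} as a visual proof, noting that the other two vertices of the face are not images of hexagon vertices. Your explicit computation is that picture made rigorous: the six permutations of $(0,s/2,s)$ shadow to $(0,s)$, $(s,0)$, the four side midpoints and $(s/2,s/2)$, and then land on $\{0,s/2,s\}$.

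The only wording to tighten is ``either of that face's two edges.'' A face has four edges, and parallel ones give the same projection. An edge perpendicular to the face would collapse the shadow to a single point, so ``an edge of the cube'' in the statement must be read as an edge of the chosen face.
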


\autoref{fig:cube_projection} provides a visual proof. Note that the two black points on the bottom face of the cube are not projections of the vertices of the hexagon.

\begin{figure}[ht]
  \centering
  \begin{subfigure}[b]{0.3\linewidth}
    \centering
    \includegraphics[width=\linewidth]{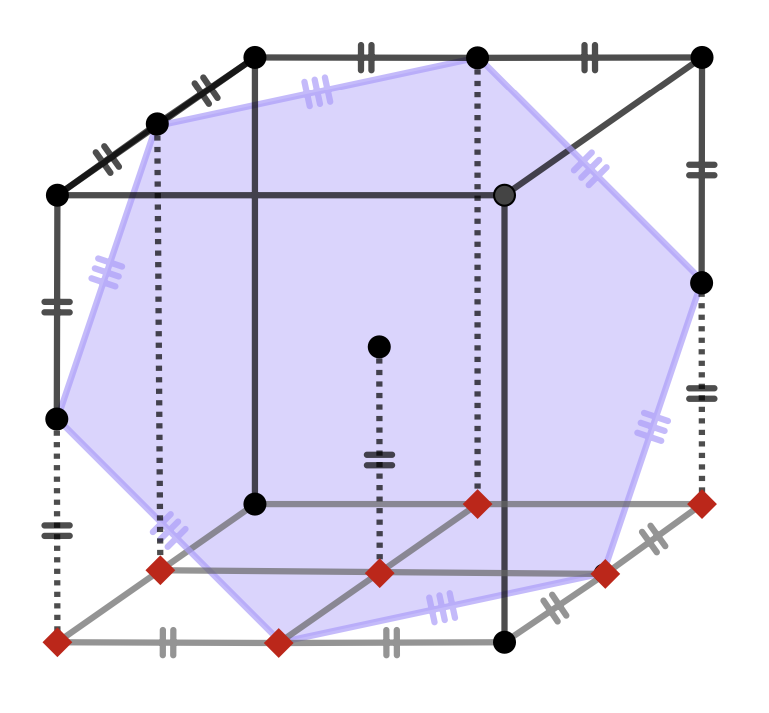}
  \end{subfigure}
  \hspace{2.4cm}
  \begin{subfigure}[b]{0.3\linewidth}
    \centering
    \raisebox{0.3cm}{
    \includegraphics[width=\linewidth]{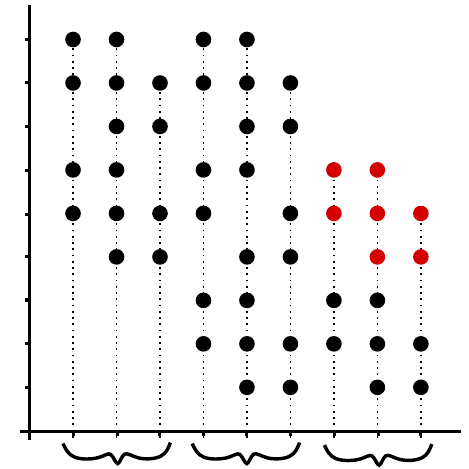}
    }
  \end{subfigure}
  \caption{%
    \emph{Left panel:} The red points on the bottom face of the cube represent the \emph{shadow}, namely the projections of the vertices and barycenter of the hexagon.
    \emph{Right panel:} Projection of all valid points in \autoref{fig:pattern all lines} onto one of the coordinate planes. The braces indicate the arithmetic progressions obtained by projecting further onto an axis. The red points correspond to the projection of a selected hexagon.
  }
  \label{fig:cube_projection}
\end{figure}

We can now identify the periodic structure associated with a solution of the Diophantine system arising from a $3\times3$ magic square: the solution is encoded by arithmetic progressions. This observation reduces the search for integer solutions to the detection of a specific periodic pattern. The next theorem makes this precise.

\begin{theorem}[P1,\cite{Robertson01101996}]\label{pattern_theorem}
Finding the entries of a $3\times3$ magic square is equivalent to identifying a periodic pattern of $9$ integers in the marked set. These integers must form three arithmetic progressions of length $3$, all with common difference $k\in\mathbb{N}$. Moreover, the middle terms of these three progressions must themselves form an arithmetic progression with common difference $K\in\mathbb{N}$.
\end{theorem}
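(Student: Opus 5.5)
The proof has two directions. In the forward direction, a magic square yields the pattern. In the converse, any such pattern can be arranged into a magic square. The natural route for the forward direction is the geometric one developed above. Starting from a solution of the system in \autoref{fig:3by3_letters}, we use \autoref{middle=n/3} to fix $e=M/3$. We then take the hexagon with barycenter $(g,e,c)$ from \autoref{fig:hexagon}, together with the full configuration of \autoref{fig:pattern all lines}. We project onto a coordinate plane, obtaining the shadow of \autoref{corollary_cube_projection}, and then onto an axis. Each projected hexagon gives three integers in arithmetic progression. \autoref{fig:cube_projection} shows three such braces, with the same spacing $k$ because all hexagons in \autoref{fig:hyperhex} are translates of one another (they have the same orientation and size by \autoref{proposition1} and \autoref{parallels}). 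The barycenters of these hexagons come from the diagonal trajectory of \autoref{trace_of_diag}. Their projections are therefore the middle terms, and they are themselves equally spaced with some spacing $K$.

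To make this rigorous without relying on pictures, I would back the geometric reading with the explicit parametrization it encodes. Eliminating variables in \eqref{dioph_system} with $e=M/3$ forces every entry to have the form $e+\epsilon_1 k+\epsilon_2 K$ with $\epsilon_1,\epsilon_2\in\{-1,0,1\}$. Concretely, one gets
$a=e+k$, $i=e-k$, $c=e+K$, $g=e-K$ up to relabeling and symmetry, with the remaining entries determined by the row sums, for example $b=e-k-K$. Reading off the multiset $\{e+\epsilon_1 k+\epsilon_2 K\}$ gives exactly three progressions $\{m-k,m,m+k\}$ with middle terms $m\in\{e-K,e,e+K\}$. Distinctness of the entries forces $k,K\neq 0$ and $k\neq \pm K$, and we may take $k,K>0$ after symmetry. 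The hexagon picture serves as the guide, and the algebra serves as the check.

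For the converse, given nine distinct integers of the form $e+\epsilon_1k+\epsilon_2K$, I would exhibit the Lo~Shu-type arrangement explicitly:
\[
\begin{array}{ccc}
e+k & e-k-K & e+K\\
e-k+K & e & e+k-K\\
e-K & e+k+K & e-k
\end{array}
\]
Every row, column, and diagonal then sums to $3e$, as one checks directly.

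The main obstacle is the forward direction's claim that the two spacings are common to all three progressions. The geometric argument delivers this through the uniform orientation of the hexagons, which is only argued pictorially. I would therefore close that step with the algebraic elimination rather than the figures. A second subtlety is the statement's phrase ``identifying a pattern in the marked set.'' It must be read with the marked set being exactly the nine entries (the exact $n^2$-marked setting). In that reading the equivalence means both directions above, and distinctness must be imposed to exclude degenerate patterns such as $K=k$, where terms coincide.
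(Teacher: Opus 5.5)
Your proposal is correct. The part that carries the weight, however, is not the paper's argument.

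The paper's proof is purely geometric. It projects the hexagon-of-hexagons configuration onto a coordinate axis and uses \autoref{corollary_cube_projection} to argue that each small hexagon collapses to a three-term progression and the barycenters collapse to another. It relies on the figures for the claim that the spacings are common, and it only addresses the forward direction.

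You keep the geometry as a heuristic and eliminate instead. Put $e=M/3$, $k=a-e$ and $K=c-e$. The two diagonals, the outer rows and the outer columns then give $i=e-k$, $g=e-K$, $b=e-k-K$, $h=e+k+K$, $d=e-k+K$ and $f=e+k-K$. The middle row and middle column then hold automatically. So the nine entries realize each $(\epsilon_1,\epsilon_2)\in\{-1,0,1\}^2$ exactly once, and $k,K$ are integers for free. This amounts to choosing a basis, adapted to the diagonals, of the two-dimensional homogeneous solution space already computed in the proof of \autoref{middle=n/3}. So it sits comfortably inside the paper's own material.

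Your route gives a checkable proof of both directions: your explicit arrangement supplies the converse that ``equivalent'' demands. It also gives exact control of degeneracy. Distinctness is precisely the condition $\delta_1k+\delta_2K\neq0$ for all $(\delta_1,\delta_2)\in\{-2,\dots,2\}^2\setminus\{(0,0)\}$. So besides $k\neq\pm K$, it also excludes $K=\pm2k$ and $k=\pm2K$.

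The paper's route gives a picture of where the two spacings come from. It also gives a template the authors reuse for the tilted planes of \autoref{theorem_weighted}.

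Finally, you need not restrict to the exact nine-marked setting. Your argument shows that nine distinct integers are the entries of a magic square if and only if they form the pattern. That is exactly what is used when the pattern is sought inside a larger marked set, such as the perfect squares.
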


\subsection{Generalization to powers and weighted variables}

The techniques used to prove \autoref{pattern_theorem} also extend to Diophantine systems with weighted variables. In place of the plane $x+y+z=M$, one considers a weighted plane such as $2x+y+z=M$. Although the geometric argument is similar, it is then more natural to work directly with the weighted Diophantine system, since the equations need not correspond to fixed weights on the entries of \autoref{fig:3by3_letters}. Nevertheless, the same framework still identifies solutions through periodic structure.

Because a weighted plane, such as $2x+y+z=M$, is tilted relative to $x+y+z=M$, its projection onto the coordinate axes exhibits different periodic patterns. The structure remains periodic, but may appear stretched or squeezed depending on the weights. For proof, see \autoref{proofs}.

\begin{theorem}\label{theorem_weighted}
Consider a Diophantine system analogous to \autoref{dioph_system}, but with weighted variables, so that the equations have the form $w_xx+w_yy+w_zz=M$, where $w_x,w_y,w_z\in\mathbb{N}$ are fixed constants. Identifying a solution to this system is equivalent to finding integers in a periodic pattern analogous to that of \autoref{pattern_theorem}, except that the pattern may appear several times in stretched or squeezed form, depending on the weights:
\begin{itemize}
    \item it appears \textbf{three times} if $w_i\neq w_j$ for all $i,j\in\{x,y,z\}$;
    \item it appears \textbf{twice} if $w_i=w_j$ for some $i,j\in\{x,y,z\}$, but $w_i\neq w_k$ for the remaining variable $k$;
    \item it appears \textbf{once} if $w_x=w_y=w_z$, in which case the system reduces to \autoref{pattern_theorem}.
\end{itemize}
The pattern is contracted by factors of $\frac{w_c}{w_a}$ and $\frac{w_c}{w_b}$ relative to its most stretched occurrence, where $w_c$ is the largest weight and $w_a,w_b$ are the other two.
\end{theorem}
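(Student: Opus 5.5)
The plan is to reduce the weighted system to the unweighted one by a diagonal change of coordinates, and then pull the periodic pattern of \autoref{pattern_theorem} back through that change. Concretely, I would introduce the linear map
\[
\Phi(x,y,z) = (w_x x,\; w_y y,\; w_z z).
\]
It is invertible over $\mathbb{Q}$ and sends the weighted plane $w_xx+w_yy+w_zz=M$ bijectively onto the plane $u+v+t=M$.

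\textbf{Step 1: transport the geometry.} I would first check that the basic lemmas survive under $\Phi$.
\begin{itemize}
\item \autoref{two_points_common_line} and \autoref{parallels} transfer verbatim. Since $\Phi$ rescales each coordinate separately, sharing an $x$-coordinate in the weighted picture is the same as sharing a $u$-coordinate in the image.
\item Transpositions do not transfer directly. $\mathcal{T}_{(i,j)}$ does not preserve the weighted plane unless $w_i=w_j$. The right object is therefore the conjugate $\Phi^{-1}\mathcal{T}_{(i,j)}\Phi$, a "weighted transposition".
\end{itemize}
Applying \autoref{triangles}, \autoref{proposition1}, \autoref{trace_of_diag} and the missing-points argument in the $(u,v,t)$ coordinates then reproduces the hexagon-of-hexagons configuration of \autoref{fig:pattern all lines} inside $u+v+t=M$. \autoref{folklore_hexagon} and \autoref{corollary_cube_projection} give the same shadows on the coordinate planes. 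Projecting onto each axis gives three arithmetic progressions of length $3$ with common difference $k$, whose middle terms form a progression with common difference $K$, exactly as in \autoref{pattern_theorem}.

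\textbf{Step 2: pull back axis by axis.} Since $\Phi^{-1}$ acts on the $i$-th axis as multiplication by $1/w_i$, the pattern seen on the $x$-, $y$- and $z$-axes is the pattern of \autoref{pattern_theorem} rescaled by $1/w_x$, $1/w_y$ and $1/w_z$ respectively. Both differences $k$ and $K$ are rescaled together, so the shape of the pattern is preserved and only its scale changes. The marked set must therefore contain one rescaled copy for each distinct value among $w_x,w_y,w_z$. This gives three copies when the weights are pairwise distinct, two when exactly two coincide, and one when all three are equal. In the last case $\Phi$ is a homothety and the statement reduces to \autoref{pattern_theorem}.

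\textbf{Step 3: the contraction factors.} Comparing copies pairwise, the scale ratio between the copies attached to weights $w_i$ and $w_j$ is $w_j/w_i$. Measuring every copy against a single reference copy therefore yields the two factors $w_c/w_a$ and $w_c/w_b$ named in the statement, with $w_c$ the largest weight. For the converse direction I would run the argument backwards: from the rescaled copies, multiply back by the weights and invoke the converse of \autoref{pattern_theorem} to recover a magic-square solution in $(u,v,t)$, then apply $\Phi^{-1}$.

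The step I expect to be the main obstacle is integrality and distinctness in this converse, together with making "equivalent" precise. A solution in $(u,v,t)$ pulls back to an integer solution only if each coordinate is divisible by the corresponding weight. Distinct $u$-values can also collide across different axes after rescaling. I would first try to state the equivalence modulo these divisibility conditions, requiring $k$ and $K$ to be multiples of $\operatorname{lcm}(w_x,w_y,w_z)$ times the basic periods and the middle term $M/3$ to be compatible. I would then verify by direct substitution that these conditions are exactly what make the eight weighted equations integrally solvable.
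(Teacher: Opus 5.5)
Your Step 1 is where the argument breaks. To invoke \autoref{middle=n/3}, \autoref{trace_of_diag}, the $48$-point configuration and \autoref{pattern_theorem} in $(u,v,t)$-coordinates, the rescaled triples $(w_xp_1,w_yp_2,w_zp_3)$ must be the eight line-triples of a single unweighted square. That is, each entry must receive the same weight in every equation containing it.

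In a system analogous to \autoref{dioph_system}, the weights are attached to positions inside each equation, and entries change position from line to line ($b$ is second in its row but first in its column). So $\Phi$ rescales the same entry differently in different triples. Weighting cells instead of positions cannot repair this unless all weights are equal. With three distinct weights you would need a diagonal Latin square of order $3$. With exactly two equal weights, the cells carrying the odd weight would have to form a permutation pattern meeting each diagonal exactly once. Neither exists, which is why the paper notes that the equations need not correspond to fixed weights on the entries. Your conjugated transpositions do not help either, since they send a solution point to points whose coordinates are in general not entries.

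A concrete check: take $(w_x,w_y,w_z)=(1,1,2)$ with the orderings of \autoref{dioph_system}. The square with rows $(13,1,9)$, $(3,7,11)$, $(8,12,6)$ satisfies all eight equations with $M=32$. Its $\Phi$-images lie on $u+v+t=32$, but no integer unweighted square has magic constant $32$, because $32$ is not divisible by $3$. Its nine entries sum to $70$, which is not $9$ times an entry, so they form no copy of the pattern of \autoref{pattern_theorem} at any scale.

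The integrality and distinctness issues you single out therefore come after a more basic failure. You must either analyse the weighted linear system directly, or restrict the notion of solution to images of unweighted configurations. The latter is in effect what the paper's sketch does. It starts from the configuration in $x+y+z=M$, projects it onto the weighted plane, and reads off the axis-wise stretching with Thales's theorem, without ever pulling a weighted solution back to an unweighted square.

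Step 3 also fails to give the stated factors, even if Step 2 is granted. If the copy on axis $i$ is scaled by $1/w_i$, the most stretched copy belongs to the smallest weight, and relative to it the other two are contracted by $w_j/w_{\min}$. For weights $1,2,3$ these factors are $2$ and $3$, whereas the statement asserts $w_c/w_a=3$ and $w_c/w_b=3/2$. You obtain $w_c/w_a$ and $w_c/w_b$ only by measuring against the $w_c$-copy. Under your own scaling that is the most \emph{contracted} copy, so these ratios are stretch factors, not contraction factors. Either justify a scaling that grows with $w_i$, as the statement's wording presupposes, or say explicitly that you are proving the reversed formulation.
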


Finally, note that in the proof of \autoref{pattern_theorem}, the argument depends only on the entries being integers. It therefore applies equally when the marked set is restricted to integers of a prescribed form, such as squares or higher powers. For example, a magic square of squared entries is obtained by searching for the same periodic pattern within the perfect squares. The same reasoning extends to higher powers, as in \autoref{fig:overview_squared}, and more generally to cases where different entries are restricted to different power sets.

\begin{figure}[ht]
  \centering
  \begin{subfigure}[ht]{0.45\linewidth}
    \centering
    \renewcommand{\arraystretch}{1.5}
    \begin{tabular}{|>{\centering\arraybackslash}c|>{\centering\arraybackslash}c|>{\centering\arraybackslash}c|}
      \cline{1-3}
      $a^z$ & $b^z$ & $c^z$ \\
      \cline{1-3}
      $d^z$ & $e^z$ & $f^z$ \\
      \cline{1-3}
      $g^z$ & $h^z$ & $i^z$ \\
      \cline{1-3}
    \end{tabular}
  \end{subfigure}
\hspace{0.01\linewidth}
  \begin{subfigure}[ht]{0.45\linewidth}
    \centering
    $
    \left\{
    \begin{aligned}
      x_a^z + x_b^z + x_c^z &= M \\
      x_d^z + x_e^z + x_f^z &= M \\
      x_g^z + x_h^z + x_i^z &= M \\
      x_a^z + x_d^z + x_g^z &= M \\
      x_b^z + x_e^z + x_h^z &= M \\
      x_c^z + x_f^z + x_i^z &= M \\
      x_a^z + x_e^z + x_i^z &= M \\
      x_c^z + x_e^z + x_g^z &= M
    \end{aligned}
    \right.
    $
    \label{eq:dioph_system_squared}
  \end{subfigure}
  \caption{A magic square of $z$th powers and its associated Diophantine system.}
  \label{fig:overview_squared}
\end{figure}

\section{Main algorithms and large magic squares}\label{Main alg and large magic sq}

\subsection{QFT-based detection and numerical experiment}

We first present the QFT-based detection step, which extracts candidate periodicities from the marked set. In the setting of \autoref{theorem_3_n_progressions}, this identifies candidate values for the common difference of the arithmetic progressions. If representative entries are known, these candidates can be used for reconstruction; otherwise, an additional recovery step is needed.

Longer periodicities improve QFT performance: under reasonable assumptions, the long periodicities make the algorithm more robust and facilitate the separation and identification of periodic patterns~\cite{nielsen00}. In particular, we may search for solutions to large magic squares under additional constraints, for example, requiring all entries to be $z$th powers, i.e.\ $x_i=j^z$ for $i\in{1,\ldots,n^2}$ and $j\in\mathbb{N}$. In that scenario, the Diophantine system is not only the linear constraints of the magic square, but also any extra constraints. As in the exact $n^2$-marked decision setting, a bound $B$ specifies the largest integer included in the marked set. By Parseval's theorem \cite{stein2003fourier}, the sum of the squared magnitudes in the time domain of the Fourier transform equals the sum of squared magnitudes in the frequency domain. Hence, for a sufficiently large magic square, if a solution of the form described in \autoref{theorem_3_n_progressions} exists, the repeated arithmetic-progressions structure is likely to interfere constructively and produce prominent peaks in the Fourier spectrum. This is, however, a relative criterion: a sufficiently strong periodic component must appear among the more prominent peaks, but reliable detection is guaranteed only when there are not too many other strong periodicities present. The prominent periodicities will appear with indices quantified in \autoref{sec:searchQFT}.

\autoref{theorem_3_n_progressions} links periodic patterns in the marked set to solutions of magic squares of arbitrary order. As the order $n$ increases, the associated arithmetic progressions become longer and more numerous.

\begin{theorem}\label{theorem_3_n_progressions}
For a magic square of any order $n>3$, with $n\neq 6$, a solution can be constructed whenever the marked set contains a pattern of $n^2$ integers arranged into $n$ arithmetic progressions of length $n$ with a common difference.
\end{theorem}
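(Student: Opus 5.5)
The plan is to realize the solution as a \emph{Graeco-Latin (Euler) superposition}. Write the marked pattern as $x_{i,j}=a_i+j\,k$ for $i\in\{1,\dots,n\}$ and $j\in\{0,\dots,n-1\}$. Here $a_1,\dots,a_n$ are the starting terms of the $n$ progressions and $k\in\mathbb{N}$ is their common difference. I assume, as in the decision setting of \autoref{intro}, that these $n^2$ integers are pairwise distinct, since they are $n^2$ distinct marked positions.

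Suppose we have two $n\times n$ Latin squares $A$ (on symbols $\{1,\dots,n\}$) and $B$ (on symbols $\{0,\dots,n-1\}$) with the following properties.
\begin{enumerate}
\item[(i)] $A$ and $B$ are orthogonal: every pair $(i,j)$ occurs in exactly one cell.
\item[(ii)] Both are \emph{diagonal} Latin squares: the main diagonal and the anti-diagonal each contain every symbol exactly once.
\end{enumerate}
Then the square with entry $x_{A(r,c),\,B(r,c)}=a_{A(r,c)}+B(r,c)\,k$ in cell $(r,c)$ is a magic square.

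The verification proceeds in three steps.
\begin{enumerate}
\item \emph{Each cell receives a pattern element, with no repetitions.} By orthogonality, the map $(r,c)\mapsto (A(r,c),B(r,c))$ is a bijection onto $\{1,\dots,n\}\times\{0,\dots,n-1\}$. So every one of the $n^2$ marked integers is used exactly once, and distinctness of the entries is inherited from the marked set.
\item \emph{Every line has the same sum.} Each row, column and both diagonals contain every symbol of $A$ once and every symbol of $B$ once, by the Latin and diagonal properties. Hence each such line sums to
\[
M=\sum_{i=1}^n a_i + k\,\frac{n(n-1)}{2},
\]
which is independent of the line. This is exactly the system of \autoref{background}.
\item \emph{Reduction.} The theorem is therefore reduced to the existence of a pair of orthogonal diagonal Latin squares of order $n$.
\end{enumerate}

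The main obstacle is this existence statement. It is known (Brown, Cherry, Most, Parker and Wallis) that orthogonal diagonal Latin squares exist for every order $n\notin\{2,3,6\}$, which matches precisely the hypotheses $n>3$, $n\neq 6$. I would cite this result.

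For a self-contained treatment of easy cases, I would give explicit constructions.
\begin{itemize}
\item For $n$ odd and not divisible by $3$, take $A(r,c)=r+c$ and $B(r,c)=r+2c \pmod n$, suitably relabeled. Orthogonality and the diagonal conditions reduce to the invertibility of $1,2,-1,-2,3$ modulo $n$. Explicitly, $\det\begin{pmatrix}1&1\\1&2\end{pmatrix}=1$, and the diagonal linear forms have coefficients $2,3,0,-1$. The zero coefficient (the anti-diagonal of $A$) is fixed by using $A(r,c)=2r+c$ instead, and I would check the relevant units case by case.
\item For $n\equiv 0\pmod 4$, use the standard complementary (Dürer-type) construction.
\item For the remaining orders (multiples of $3$ and $n\equiv 2\pmod 4$ with $n\ge 10$), defer to the cited theorem.
\end{itemize}
I expect verifying the diagonal conditions in these explicit constructions to require the most care.

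Finally, I would remark why the exclusions are natural. For $n=6$ no pair of orthogonal Latin squares exists at all (Tarry). For $n=3$ no orthogonal diagonal pair exists, and that case is instead covered by the different structure of \autoref{pattern_theorem}, where the middle terms must themselves be in progression rather than arbitrary.
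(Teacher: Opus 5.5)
Your proposal is correct and is essentially the paper's proof. Both superimpose a pair of orthogonal diagonal Latin squares, which exist for all $n\notin\{2,3,6\}$ by the cited Brown et al.\ result, and read each cell as a progression index plus an offset; your direct assignment $a_{A(r,c)}+B(r,c)\,k$ is a cleaner form of the paper's route (build a normal square via $F(x,y)=x-y$, then shift the symbols of $Lat_1$) and also makes a general common difference $k$ explicit. The optional self-contained cases are not needed, and the $n\equiv 0\pmod 4$ one should be dropped: the D\"{u}rer-type complementary construction gives a magic square of $1,\dots,n^2$ whose base-$n$ digit squares are not Latin, so it does not supply the orthogonal diagonal pair your reduction requires.
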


The statement of \autoref{theorem_3_n_progressions} is motivated by the connection between arithmetic progressions and solutions of order-$3$ magic squares discussed in \autoref{sol_to_per}. Its proof, however, is substantially different and extends beyond the order 3 case. As shown in \autoref{proofs}, it proceeds by constructing magic squares from arithmetic progressions using tools from combinatorial design, in particular orthogonal Latin squares \cite{raghavarao2005block,c8ff8faf0b5e412fb9dcf5933f672e20,hilton1973double,Brown1993}. In \autoref{examples}, there are examples of order 4 magic squares constructed that way.


The algorithm for the exact $n^2$-marked decision setting and the algorithm for detecting $n^2$-entry patterns inside larger marked sets follow essentially the same QFT-based detection procedure. In both settings, the QFT is used to detect the relevant periodicities. The reconstruction step is then carried out using \autoref{pattern_theorem}, \autoref{theorem_weighted}, or \autoref{theorem_3_n_progressions}, depending on the problem setting. In the setting of \autoref{theorem_3_n_progressions}, once the relevant arithmetic progressions have been recovered, the proof of the theorem gives their explicit placement in the grid via the corresponding diagonal Latin-square construction.

In the case of \autoref{theorem_3_n_progressions}, recovering the starting points of the arithmetic progressions generally requires additional information. One can assume that a representative entry from each progression is given. Under this assumption, Algorithm~\ref{alg:search_qft} gives a complete detection-and-reconstruction pipeline. If these representatives are not given, the shifted-oracle procedure of the next subsection can be used under additional structural assumptions.

\begin{figure}[t]
    \centering
    \begin{quantikz}
    \lstick{$\ket{0}^{\otimes q}$}
        & \gate{H^{\otimes q}}
        & \gate{U_f}
        & \gate{QFT}
        & \meter{} 
    \end{quantikz}
    \caption{Quantum circuit for detecting periodicities in a binary string encoded as a quantum phase oracle $U_f$, where $U_f\ket{x}=(-1)^{f(x)}\ket{x}$. The single wire represents the $q$-qubit computational register. The oracle $U_f$ can either encode the binary string of the exact $n^2$-marked decision setting or encode a search space that might contain solutions to a large magic square (see \autoref{sec:searchQFT})}.
    \label{fig:alg_circuit}
\end{figure}

\begin{algorithm}[t]
\caption{QFT-Based Detection of Magic-Square Solutions}
\label{alg:search_qft}
\begin{algorithmic}[1]
\Require A marked set $S \subset \mathbb{Z}$ encoded by a Boolean function $f(x)$; auxiliary reconstruction data for the chosen setting, such as one representative entry from each progression in the setting of \autoref{theorem_3_n_progressions}.

\Ensure Detect/reconstruct a magic-square solution of a prescribed form, if one is found.
\State \textbf{Initialize:} Prepare $q$ qubits in $\ket{0}^{\otimes q}$.
\State \textbf{Apply Hadamard gates:} Create a uniform superposition over all computational basis states.
\State \textbf{Apply phase oracle $U_f$:} Apply $U_f\ket{x}=(-1)^{f(x)}\ket{x}$.
\State \textbf{Apply QFT:} Perform the Quantum Fourier Transform on the computational qubits.
\State \textbf{Measure:} Record prominent Fourier peaks and extract candidate periodicities.\label{alg1:measure}
\State \textbf{Reconstruct:} Use the appropriate theorem
(\autoref{pattern_theorem} for $n=3$, \autoref{theorem_weighted} for the weighted $3\times3$ case, or \autoref{theorem_3_n_progressions} for $n>3$, $n\neq 6$) to obtain a candidate solution.
\State \textbf{Verify:} Check classically that the reconstructed entries are marked by $f$ and satisfy the required Diophantine constraints.
\If{verification succeeds}
    \State \Return the solution.
\Else
    \State \Return ``No solution of the prescribed form found.''
\EndIf
\end{algorithmic}
\end{algorithm}

A measured Fourier outcome $r$ provides information about the common difference through an approximation of the form $\frac{r}{Q} \approx \frac{a}{k}$ for some integer $a$. In that case, the continued-fraction algorithm, as used in Shor's period-finding procedure \cite{shor1994algorithms}, can recover a candidate denominator $k$ from the sample. Repeating the QFT yields a list of candidate denominators, and the subsequent classical verification step checks whether one of them gives rise to the family of $n$ arithmetic progressions required by \autoref{theorem_3_n_progressions}.

As a numerical illustration, we considered marked sets consisting of $13$ arithmetic progressions of length $13$, with additional noise chosen so that $50\%$ of the inputs are $1$. \autoref{QFT_Shots} compares the exact spectrum with the outcomes from $40$ measurement shots for a binary string containing $13$ arithmetic progressions of length $13$, with common difference $5$ and shifts $68i$, $i=1,\ldots,13$. Because $68 \bmod 5 \neq 0$, these shifts do not produce maximal constructive interference. Nevertheless, the prominent peaks remain identifiable as outcomes with higher probability.

\begin{figure}[t]
    \centering
    \includegraphics[width=0.8\linewidth]{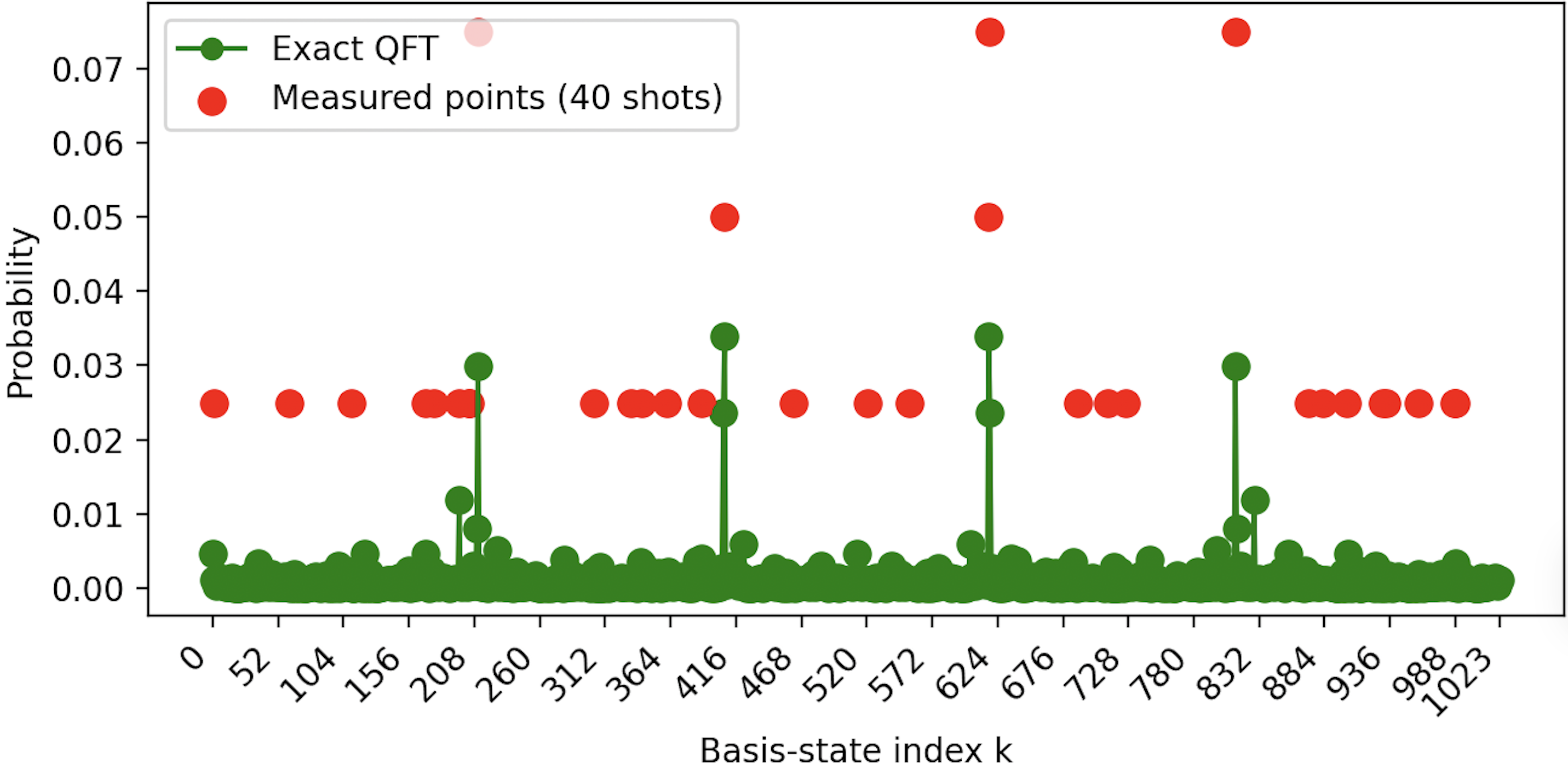}
    \caption{Exact QFT and 40-shot estimate for a binary string containing 13 arithmetic progressions of length 13, with common difference $5$ and shifts $68i$, $i=1,\ldots,13$, with additional random noise so that $50\%$ of the entries are $1$s. The prominent peaks are identifiable as shots with high probability.}\label{QFT_Shots}
    \label{fig:spectrum1}
\end{figure}

This completes the QFT-based detection stage. We next consider how to recover the progression starting points when the representative entries are not given.

\subsection{Recovering starting points with shifted oracles}

Suppose now that none of the representative entries of the arithmetic progressions are given in advance. Then Algorithm~\ref{alg:search_qft} can still be run up to line~\ref{alg1:measure} to identify a candidate common difference $k$, but it does not by itself determine the starting points $l_1,\dots,l_n$. If the solution further satisfies the hypotheses of \autoref{Recovering_D_shift}, these starting points can be recovered by the shifted-oracle procedure described below; see \autoref{appendix_shifted_oracles} for the construction of the shifted oracle.
 
The use of a quantum phase oracle is essential here. Because the oracle can be queried in superposition, the shifted oracle $U_f^{(s)}=T_s^\dagger U_f T_s$ can be interfered with the original oracle $U_f$ and the resulting overlap signal can be estimated directly by a Hadamard-test \cite{lin2022lecturenotesquantumalgorithms,10.1007/978-3-030-35423-7_21}. By contrast, evaluating the analogous correlation for a given shift $s$ classically would generally require explicit access to many entries of the binary string, making the shift-and-compare step inefficient.

\begin{algorithm}[t]
\caption{Recovering Structured Starting Points via Shifted Oracles and Hadamard-Tests}
\label{alg:shifted_oracle_recovery}
\begin{algorithmic}[1]
\Require A phase oracle $U_f$ marking a union of $n$ arithmetic progressions of length $n$; a known common difference $k$.
\Ensure Recover the spacing $D$ between progression starting points and use it to reconstruct a candidate solution.

\State Prepare $\ket{\psi}=B^{-1/2}\sum_{x=1}^{B}\ket{x}$ over the oracle domain.
\For{each tested shift $s$}
    \State Construct the shifted oracle $U_f^{(s)} = T_s^\dagger U_f T_s$.
    \State Interfere $U_f$ and $U_f^{(s)}$ on $\ket{\psi}$ using a Hadamard-test circuit.
    \State Estimate the corresponding overlap signal for shift $s$.
\EndFor

\State \textbf{Identify peaks:} From the estimated overlap values, locate the off-center peaks of the autocorrelation-type signal.
\State \textbf{Recover spacing:} Infer the spacing $D$ from the centers of these peaks.
\State \textbf{Construct candidate square:} Use \autoref{theorem_3_n_progressions} to construct the corresponding magic square.
\State \textbf{Verify:} Classically verify that the reconstructed entries are marked by $f$ and satisfy the required Diophantine constraints.

\If{verification succeeds}
    \State \Return the solution.
\Else
    \State \Return ``No structured solution found.''
\EndIf
\end{algorithmic}
\end{algorithm}

\begin{figure}[t]
\centering
\begin{quantikz}
\lstick{$\ket{0}$} & \gate{H} & \ctrl{1}  & \gate{H} & \meter{} \\
\lstick{$\ket{\psi}$} & \qw & \gate{(T_s^\dagger U_f T_s)U_f}  & \qw & \qw
\end{quantikz}
\caption{Hadamard-test circuit used in \textbf{Algorithm 2}. The controlled operation composes the shifted oracle $U_f^{(s)} = T_s^\dagger U_f T_s$ with the original oracle $U_f$. Measuring the control qubit estimates an overlap signal between the original and shifted marked sets. $\ket{\psi}$ is the uniform superposition over the oracle domain.}
\label{fig:shifted_oracle_circuit}
\end{figure}

For the circuit of Figure~\ref{fig:shifted_oracle_circuit}, the expectation value of the control-qubit observable $Z$ is given by $\langle Z\rangle=\Re\!\bigl(\langle \psi \mid U_f U_f^{(s)} \mid \psi \rangle\bigr).$

The assumptions of \autoref{Recovering_D_shift} below are promise assumptions for the recovery guarantee.

\begin{theorem}\label{Recovering_D_shift}
Assume the marked set consists of a union of $n$ disjoint arithmetic progressions of length $n$, all with known step $k$, together with additional marked points. If the progression starting points are equally spaced and have spacing $D>2(n-1)k$, and if the additive perturbation on the autocorrelation is bounded by $|\Delta(s)|\le \eta$ for all relevant shifts $s$, with $\eta$ smaller than half of the one-step drop along the off-center peaks of the clean autocorrelation, then the starting-point pattern can be recovered in quantum polynomial time.
\end{theorem}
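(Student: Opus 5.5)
We compute the clean autocorrelation exactly, show it has a recognizable peak structure, and then argue that the Hadamard-test estimates plus the perturbation bound preserve that structure. Write the clean marked set as $P=\bigcup_{m=0}^{n-1}\{l_1+mD+jk : 0\le j\le n-1\}$ and let $C(s)=|P\cap(P-s)|$. For a phase oracle we have $\langle\psi|U_fU_f^{(s)}|\psi\rangle = B^{-1}\sum_x(-1)^{f(x)+f(x+s)}$, ignoring boundary effects of $T_s$. This equals $1-\tfrac{2}{B}\bigl(|S|+|S+s|-2|S\cap(S-s)|\bigr)$, an affine function of the intersection count $|S\cap(S-s)|$. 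So the Hadamard-test signal $\langle Z\rangle$ of Figure~\ref{fig:shifted_oracle_circuit} determines the autocorrelation up to a known affine map. The extra marked points then enter only as the additive term $\Delta(s)$, which is bounded by $\eta$ by hypothesis.

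\textbf{Step 1: the clean autocorrelation.} Within one progression, the shift $s=tk$ gives overlap $n-|t|$. Two progressions $m,m'$ overlap under shift $s$ only if $s=(m'-m)D+tk$ with $|t|\le n-1$. Because $D>2(n-1)k$, the windows $\{(m'-m)D+tk\}$ for different values of $d=m'-m$ are disjoint, so nothing aliases. Hence, for $|t|\le n-1$,
\begin{equation*}
C(dD+tk)=(n-|d|)(n-|t|),
\end{equation*}
and $C(s)=0$ off these windows. Each off-center peak (for $d\ne0$) is a discrete tent centred at $dD$ with apex $(n-|d|)n$. Its one-step drop is $\delta_d=n-|d|$, which is at least $1$ and is realised after the affine rescaling by $2\delta_d/B$.

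\textbf{Step 2: robustness.} If $\eta<\delta/2$, with $\delta$ the one-step drop, then for each $d$ the perturbed value at $dD$ strictly exceeds the perturbed values at $dD\pm k$. Comparisons within a peak are therefore preserved, and the argmax over each window is exactly $dD$. The windows are separated because $D>2(n-1)k$, so they can be told apart. Centers of adjacent peaks are then spaced by $D$, which recovers $D$; equivalently, the first off-center center is $D$ itself.

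\textbf{Step 3: complexity, and the main obstacle.} Each Hadamard test uses $O(1)$ oracle calls plus the adder $T_s$, which needs $O(\mathrm{poly}\log B)$ gates. The difficulty is that the signal gap is $2\delta/B$, so resolving it by Hoeffding sampling needs $O(B^2/\delta^2)$ shots. This is polynomial only if $B$ is treated as the input scale or $\eta$ is understood as already in normalized units. I would handle this by stating the precision in normalized units and using amplitude estimation to get $O(1/\epsilon)$ queries per shift. For the number of shifts, a brute scan of $s\le 2D$ is again polynomial in $B$. To search less, I would scan only multiples of $k$, since overlaps occur only at $s\equiv dD\pmod k$ within windows, and would locate the first window by doubling followed by a local search. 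Finally, $D$ together with $k$ and one start $l_1$ yields the progressions. The start $l_1$ itself is found by checking the $n$ residues, via the QFT phase or by a direct marked-query test, after which \autoref{theorem_3_n_progressions} applies.
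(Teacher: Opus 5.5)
Your Steps 1 and 2 follow the paper's argument. The paper also writes every overlap as $s=qD+mk$ and obtains $C_P(qD+mk)=(n-|q|)(n-|m|)$. It uses $D>2(n-1)k$ to keep the triangular peaks at $0,\pm D,\dots,\pm(n-1)D$ apart, and $\eta<\delta/2$ to argue that local evaluations still locate each center, which gives $D$.

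You genuinely depart in Step 3, and there your accounting is more careful than the paper's. The paper asserts two things: each shift needs $O(1/\varepsilon^2)$ shots ``independently of the length of the binary string'', and the number of tested shifts is polynomial in $n$. You instead observe that $\langle Z\rangle$ is affine in $C_S(s)$ with slope of order $1/B$. In fact $\langle Z\rangle = 1-4(|S|-C_S(s))/B$, so your own formula gives a drop of $4\delta/B$, not $2\delta/B$. Hence the required $\varepsilon$ is $\Theta(\delta/B)$. You also observe that nothing tells you where the first off-center window sits, so up to order $D$ shifts must be probed.

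What your version buys is an honest bound, polynomial in $B$. What it costs is any separation from classically reading all $B$ bits and computing the whole autocorrelation by FFT in $O(B\log B)$. The paper's $B$-independent claim holds only if the normalized drop is treated as a fixed constant and a probe is assumed to already lie inside a peak. Those are exactly the points you flagged as the obstacle.

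Three pieces of your Step 3 do not work and should be removed.

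\emph{Scanning only multiples of $k$.} The nonzero points of the $d$-th window are $dD+tk$, which lie in the residue class $dD \bmod k$. Since $D$ is unknown, so is that class. When $k$ does not divide $D$, the multiples of $k$ see only the central peak.

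\emph{Doubling followed by local search.} Doubling needs a monotone signal. Here $C_P$ vanishes on the gaps between windows, and each window occupies only about $2(n-1)k$ of every $D$ shifts. A geometric probe sequence therefore generally misses every window.

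\emph{Recovering $l_1$.} $C_S(s)$ is invariant under translating $S$. So is the QFT output distribution, since a translation only contributes a phase $\omega^{cr}$ that measurement discards. Neither can determine $l_1$. A direct marked-query test needs a search over up to $B$ candidates. For example, Grover over $x$ with the predicate that all points $x+mD+jk$ are marked uses $O(\sqrt{B}\,n^2)$ queries, which is again only polynomial in $B$.

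The paper does not attempt to recover $l_1$. It reads ``starting-point pattern'' as the spacing structure up to translation, i.e.\ $D$, so you can simply stop once $D$ is found.
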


\autoref{Recovering_D_shift} establishes the correctness of the shifted-oracle procedure of \textbf{Algorithm 2} under its stated hypotheses. The task addressed by \textbf{Algorithm 2} is to recover the starting points of the arithmetic progressions once \textbf{Algorithm 1} up to line \autoref{alg1:measure} has identified a candidate common difference $k$. In the setting of \autoref{theorem_3_n_progressions}, the marked set contains several arithmetic progressions with the same common difference $k$ but different first terms. Writing these progressions as
$$
A_j=\{l_j,\ l_j+k,\ l_j+2k,\dots,l_j+(n-1)k\}, \qquad j=1,\dots,n,
$$
the QFT step identifies candidate values of $k$, while the remaining difficulty is to determine the first terms $l_1,\dots,l_n$.

To do so, one shifts the oracle and compares it with the original one; see \autoref{appendix_shifted_oracles}. If $U_f^{(s)}$ denotes the oracle shifted by $s$, then the shifted-oracle procedure probes an autocorrelation-type signal whose enhanced contributions occur at shifts of the form
$$
(l_j-l_i)+mk,
$$
where $i,j\in\{1,\dots,n\}$ and $m\in\{-(n-1),\dots,n-1\}$.
Hence, once $k$ is known, the procedure can in principle reveal the difference set
$$
\{\,l_j-l_i : i,j\in\{1,\dots,n\}\,\},
$$
thereby reducing the classical reconstruction problem.

In the structured setting of \autoref{Recovering_D_shift}, the starting points are equally spaced, so the autocorrelation has triangular peaks whose centers determine the spacing $D$ between the arithmetic progressions. This spacing is then used to reconstruct a candidate pattern of first terms and, through \autoref{theorem_3_n_progressions}, the corresponding candidate magic square.

For each fixed shift $s$, the number of Hadamard-test shots depends only on the desired accuracy.

\section{Finite bounds for restricted magic-square systems}

We now turn from detecting structured solutions to bounding restricted search spaces. For some restricted $3\times3$ magic-square systems, we derive finite bounds beyond which no solutions can exist; such bounds make the remaining search exhaustive, either by listing all solutions or proving that none occur.

\begin{proposition}\label{bound_non_existence}
To decide whether there exists a $3\times 3$ magic square with the additional constraint that its largest entry is a power of $z$ while all other entries are powers of $z-1$, it suffices to check finitely many candidates: there is an upper bound $U$ such that any such magic square must have all entries $\le U$.
\end{proposition}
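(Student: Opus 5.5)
The plan is to show something stronger than a finite bound: for every integer base $z$, no such magic square exists at all, so any $U$ works vacuously. The only tool needed is \autoref{middle=n/3} together with elementary divisibility. I assume throughout that ``power'' means a nonnegative integer power, and that the nine entries are pairwise distinct, as in the definition of a magic square.

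\textbf{Step 1 (antipodal pairs).} Fix a solution with centre $e$. By \autoref{middle=n/3}, $e=M/3$. The middle row, middle column and the two diagonals each contain $e$ and sum to $M$. Hence the remaining two entries of each of these lines sum to $2e$. This splits the eight non-central entries into four disjoint pairs $\{x,x'\}$ with $x+x'=2e$ (in the labelling of \autoref{fig:3by3_letters} these are $\{d,f\},\{b,h\},\{a,i\},\{c,g\}$). This is also the symmetric structure visible in \autoref{pattern_theorem}.

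\textbf{Step 2 (use the power constraint).} Write $w=z-1$. Only one entry is the largest, so at most one of the four pairs contains it. The other three pairs, and the centre $e$ itself, consist of powers of $w$. Pick one such pair: $w^a+w^b=2w^c$ with $e=w^c$. Distinctness gives $a\neq b$, and both differ from $c$; say $a<c<b$. Dividing by $w^a$ gives
\begin{equation*}
1+w^{\,b-a}=2w^{\,c-a},\qquad b-a\ge 1,\; c-a\ge 1 .
\end{equation*}

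\textbf{Step 3 (divisibility).} Suppose $w\ge 2$. Reducing modulo $w$ gives $1\equiv 0 \pmod w$, which is impossible. Suppose instead $w\in\{0,1\}$. Then the eight non-largest entries take at most the values $0$ and $1$, contradicting distinctness. So no solution exists for any $z$. The statement therefore holds with any $U$, for example $U=0$, and the exhaustive check is empty.

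The main obstacle is conceptual rather than computational. One must make sure the three ``pure'' pairs really avoid the largest entry, and fix the reading of ``power'' (nonnegative exponents, distinct entries). If the intended reading allowed one pair to mix $z^m$ with $w^d$, I would instead bound $m$ through $z^m=2w^c-w^d$ and growth comparison. Under the stated hypotheses, though, the pure pairs alone already force the contradiction.
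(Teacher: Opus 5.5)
Your argument is correct for the reading you fixed, namely entries $z^m$ and $(z-1)^d$ with $z\ge 1$ (your case split does not cover negative bases). It needs one step you skipped: the centre is not the largest entry. This holds because in each antipodal pair $x+x'=2e$ with $x\neq x'$, so one member exceeds $e$.

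However, the paper uses a different reading. It takes ``a power of $z$'' to mean ``a $z$th power'': its proof sets the largest entry to $t^z$ and the second largest to $(t-1)^{z-1}$, matching the earlier usage $x_i=j^z$. Under that reading your Step~3 collapses. Distinct $(z-1)$th powers can be in arithmetic progression (for $z=3$, $1+49=2\cdot 25$), so the centre together with a pure antipodal pair yields no contradiction. The paper argues by size instead. It normalises by $t^z$, notes $\bigl(t^z-(t-1)^{z-1}\bigr)/t^z\to 1$, and uses \autoref{pattern_theorem} (entries $l+ik+jK$, $i,j\in\{0,1,2\}$). This says the gap between the two largest entries cannot be that large a fraction of the largest one, which gives a threshold $t_0$.

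So the two proofs settle different problems. Yours is exact and elementary and gives outright non-existence, but only for base-$z$ powers. The paper's targets $z$th powers but, as written, only treats squares whose second-largest entry is exactly $(t-1)^{z-1}$. Under the $z$th-power reading, multiplying every entry by $\mu^{z(z-1)}$ sends $t^z$ to $(\mu^{z-1}t)^z$ and $s^{z-1}$ to $(\mu^{z}s)^{z-1}$. Hence one such square yields arbitrarily large ones, and a finite $U$ exists only if no such square exists at all. Your non-existence strategy is therefore the right shape there too. Your Steps~1--2 transfer verbatim and reduce the question to three distinct positive $n$th powers in arithmetic progression, with $n=z-1$:
\begin{itemize}
\item for $n\ge 3$ this is excluded by the Darmon--Merel theorem (a deep result, not a congruence argument);
\item for $n=2$ it excludes nothing;
\item for $n=1$, i.e.\ $z=2$, the Lo~Shu square has largest entry $9=3^2$, and rescaling it by $\mu^2$ shows that no bound exists.
\end{itemize}
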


This result extends to magic squares whose entries involve any mixture of two or more powers. The assumption that the largest entry is the one with the different power can also be removed and replaced by the same condition on any other entry. We state the result in this form only because it makes the proof easier to follow. The same type of argument also extends to weighted Diophantine systems, as in \autoref{theorem_weighted}, with terms raised to mixed powers. Our goal here is not to exhaust this class of results, but to demonstrate their existence. The proof of \autoref{bound_non_existence} (see \autoref{proofs}) illustrates how such bounds can be derived for magic squares involving mixed powers. For weighted systems, the weights can be chosen so as to produce arbitrarily large bounds.

\section{Quantum certification of absence of solutions}\label{absence}

We next consider a different type of absence result, based on number-theoretic obstructions. Building on classical number-theoretic results and Shor's quantum algorithm for integer factorization, we can construct quantum tests to certify the absence of solutions in certain restricted settings. Although these settings are specialized, they can be reduced to problems for which quantum algorithms offer an exponential speedup over the best-known classical methods. 

The tests below use \autoref{pattern_theorem} to reduce the $3\times3$ case to the detection of a finite arithmetic pattern, and \autoref{theorem_3_n_progressions} to restrict the larger-order case to solutions built from arithmetic progressions.

By \autoref{pattern_theorem}, a solution to a $3\times3$ magic square exists if and only if the marked set contains a specific pattern of $9$ integers. This pattern is determined by three integers $l$, $k$, and $K$, where:
\begin{itemize}
\item $l$ is the smallest integer in the pattern,
\item $k$ is the common difference of the three arithmetic progressions, and
\item $K$ is the common difference of the arithmetic progression formed by their first terms.
\end{itemize}

This leads to an efficient quantum test for the absence of such solutions, based on Shor's factorization algorithm \cite{Shor_poly,shor1994algorithms}, under the assumption that at least one of the defining distances is a square. Here, the defining distances are $k$ and $K$.

\begin{proposition}\label{byShors}
There exists an efficient quantum algorithm that can certify the absence of solutions corresponding to $3\times3$ magic squares of squares in the marked set, provided that at least one defining distance is a square.
\end{proposition}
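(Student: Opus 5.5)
Via \autoref{pattern_theorem}, the plan is to turn the magic square of squares into a family of representations of a fixed integer as a difference of two squares. Every such representation comes from a factorization, so after Shor's algorithm factors the relevant integer, all of them can be listed and checked classically.

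\emph{Step 1: reduce to differences of squares.} By \autoref{pattern_theorem}, a $3\times3$ magic square of squares is a set of nine perfect squares
\[
l+ak+bK, \qquad a,b\in\{0,1,2\},
\]
with $k,K\in\mathbb{N}$. Each of the three progressions of squares has common difference $k$, and the progression of first terms has common difference $K$. Assume without loss of generality that $k=t^2$ is a square; the case where $K$ is a square is symmetric after transposing the roles of rows and columns. Write consecutive terms of any progression as $x^2$ and $y^2$. Then
\[
y^2-x^2=(y-x)(y+x)=k .
\]
So every solution forces representations of the known integer $k$ as a difference of two squares. These correspond bijectively to factorizations $k=uv$ with $u\le v$ and $u\equiv v\pmod 2$, via $x=(v-u)/2$ and $y=(v+u)/2$. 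The same holds for $K$ along the columns.

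\emph{Step 2: factor $k$ with Shor.} Factor $k$ (or $K$) completely with Shor's algorithm \cite{shor1994algorithms}; this takes time polynomial in $\log k$. Since $k=t^2$, factoring $t$ is enough, and the exponents of the prime factorization of $k$ are then all even. The number of divisor pairs is $\tau(k)$, which can be large in the worst case. Enumerating them is still classical post-processing. For each admissible pair $(x,y)$ with $y^2-x^2=k$, the whole candidate square is then determined. Take $l=x^2$, or $l=x^2-ak$ for the appropriate offset $a$. Once a second anchor is obtained the same way from $K$, the full nine-point pattern is fixed, and each of its entries can be checked against the oracle and for being a perfect square.

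\emph{Step 3: certify absence.} If no divisor pair produces a pattern whose nine entries are all marked squares, then by the equivalence in \autoref{pattern_theorem} no magic square of squares exists in the marked set with that defining distance. This is the certificate. The square hypothesis on $k$ is used in two places. First, $k$ is a perfect square, so $k\not\equiv 2\pmod 4$ and representations exist. Second, the hypothesis pins $k$ itself to something that can be read off or enumerated: the candidate values of $k$ are supplied either by the QFT stage of Algorithm~\ref{alg:search_qft} or by the bound $B$ on the marked set.

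\emph{Main obstacle.} The step I expect to be hardest is arguing that the whole procedure is efficient. Shor handles the factoring, but the number of candidate $k$ (or $K$) values and the number of divisor pairs $\tau(k)$ must both be kept polynomially bounded. I would first try to restrict $k$ to the polynomially many candidates returned by the period-finding step. I would then use the standard bound $\tau(k)=k^{o(1)}$, together with the fact that most divisor pairs can be rejected immediately because $l=x^2$ must lie in $[1,B]$ and every progression term must be a square. If that is not enough, I would weaken the claim to an efficiently \emph{verifiable} certificate: the factorization of $k$ supplied by Shor, which any verifier can check exhaustively.
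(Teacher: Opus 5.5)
\textbf{There is a genuine gap: the procedure is not efficient, and it does not certify absence.} Your difference-of-squares reduction is correct. The problem sits exactly at the point you flag as the main obstacle, and none of your repairs closes it.

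The defining distance $k$ belongs to the hypothetical solution; it is not an input. A certificate of absence must therefore exclude every admissible $k=t^2\le B$, but your procedure handles one known $k$ at a time.
\begin{itemize}
\item Restricting to the candidates from the QFT stage is not sound for certification. The paper itself calls that criterion relative. A nine-point pattern among many other marked squares carries very little Fourier weight, so an incomplete candidate list excludes nothing.
\item Enumerating all $t\le\sqrt{B}$ instead costs about $\sqrt{B}$ rounds, which is exponential in the input size $\log B$.
\item Within a single round the divisor enumeration is not polynomial either. $\tau(t^2)$ can be as large as $2^{c\log t/\log\log t}$, which is $k^{o(1)}$ but superpolynomial in $\log k$.
\item When only $k$ is a square, $K$ is a further unknown that your ``second anchor'' step takes for granted.
\item Exhibiting the factorization of one $k$ certifies absence only for that particular $k$.
\end{itemize}
Finally, the square hypothesis never does real work in your argument. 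Difference-of-squares representations exist for every $k\not\equiv 2 \pmod 4$, and squareness only thins the candidate list from about $B$ to about $\sqrt{B}$.

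\textbf{The missing idea is to use the hypothesis arithmetically on the entries, not as a search aid.} The paper's proof does this pointwise. With $l=s^2$ and $k=t^2$ (or $K=u^2$), the entry $l+k=s^2+t^2$ must be a sum of two squares. The sum-of-two-squares theorem decides this once the integer is factored, which Shor's algorithm does efficiently. The test therefore runs over elements of the marked set and never enumerates distances or divisor pairs. (Since $l+k$ is itself a square, this criterion only has force in its two-nonzero-squares form.)

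Your own setup in fact finishes the argument outright. Three consecutive entries $x^2<y^2<z^2$ of a row satisfy $y^2=x^2+t^2$ and $z^2=y^2+t^2$, hence $y^4-t^4=(xz)^2$. This has no solutions in positive integers by Fermat's right-triangle theorem, and $x=0$ would force $z^2=2t^2$. So no solution with a square defining distance exists at all. The trivial procedure that always reports absence is then a correct and efficient certifier. That is where the hypothesis does its work, not in bounding $\tau(k)$.
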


Similarly, \autoref{theorem_3_n_progressions} shows that solutions for magic squares of order $n\neq 3,6$ are determined by $n+1$ integers, namely $l_1,l_2,\ldots,l_n,k$, where:
\begin{itemize}
\item $l_1,l_2,\ldots,l_n$ are the first terms of the $n$ arithmetic progressions, and
\item $k$ is their common difference.
\end{itemize}
In this setting, $k$ and $K$ are the defining distances of a solution, whereas $l$ and $l_i$ ($i\in[n]$) are shifting constants. We restrict attention to $n=4,5$ and $n>6$, since these cases exhibit regular structure in their solution spaces, whereas $n=6$ is exceptional and introduces irregularities that complicate both classical and quantum approaches~\cite{andrews2004magic}.

\begin{proposition}\label{byShors2}
For magic squares of squares of any order $n\neq 3,6$ and a given marked set, there exists an efficient quantum algorithm that can certify the absence of solutions of the form described in \autoref{theorem_3_n_progressions}, provided that one defining distance is a square.
\end{proposition}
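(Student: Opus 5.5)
The approach mirrors what Proposition~\ref{byShors} presumably does for the $3\times3$ case. We reduce the existence of a structured solution to a number-theoretic condition on sums of two squares, and then certify its failure using factorization via Shor's algorithm. By Theorem~\ref{theorem_3_n_progressions}, a solution of the prescribed form consists of $n$ arithmetic progressions $A_j=\{l_j, l_j+k,\dots,l_j+(n-1)k\}$, and in a magic square of squares every entry is a perfect square. Taking two consecutive terms of a progression, we get
\[
x^2 = y^2 + k, \qquad\text{that is,}\qquad k=(x-y)(x+y).
\]
When the differences between first terms are also used as defining distances, we similarly get $l_j-l_i = u^2-v^2$ for some integers. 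The hypothesis that one defining distance is a square, say $k=t^2$, gives $x^2 = y^2+t^2$. So $(y,t,x)$ is a Pythagorean triple, and every entry $x$ of a progression together with its predecessor yields a representation of $x^2$ as a sum of two squares with one summand equal to $t^2$.

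The key steps, in order, are as follows. First, use Theorem~\ref{theorem_3_n_progressions} to restrict the candidate solutions to the $(l_1,\dots,l_n,k)$-parametrized family, so that a solution exists only if the marked set contains the corresponding $n^2$ squares.

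Second, record the necessary arithmetic condition. Since $k=t^2=(x-y)(x+y)$, every admissible pair of consecutive squares in a progression corresponds to a factorization of $t^2$ into two factors of equal parity. Equivalently, by the classical characterization of Pythagorean triples (the Fermat/Euler theorem on sums of two squares), each relevant entry $x$ must have a prime divisor $p\equiv 1 \pmod 4$ whenever $y,t>0$ with $\gcd$ reduced appropriately. So $x$ cannot be a hypotenuse if all its prime factors are $2$ or $\equiv 3 \pmod 4$.

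Third, run Shor's algorithm on the relevant integers to obtain their prime factorizations in quantum polynomial time. The relevant integers are $t$, or candidate entries $\sqrt{x_i}$, or the finitely many factor pairs of $k$; the number of divisors of $k$ is subexponential and they can be listed from the factorization. If no factorization admits a valid representation, meaning no admissible factor pair $(d,k/d)$ produces squares lying in the marked set, or every candidate hypotenuse lacks a prime $\equiv 1\pmod 4$, then absence is certified. Each check is a classical computation on the factorization, plus oracle queries on polynomially many candidate integers.

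The main obstacle is the third step: making sure the number of candidates to check is polynomial, not merely finite. I would handle this by enumerating factor pairs of $k=t^2$ directly. Each pair $(d,e)$ with $de=t^2$ and $d\equiv e \pmod 2$ fixes $x=(d+e)/2$ and $y=(e-d)/2$. So the possible progressions with step $k$ correspond to divisors of $t^2$, whose count is $2^{O(\log t/\log\log t)}$. This is polynomial in the input length only under a bounded number of prime factors. Otherwise, certification is restricted to the cases where the factorization exhibits a structural obstruction, such as no prime $\equiv 1\pmod 4$, or a parity mismatch making an $n$-term progression of squares impossible. A further supporting fact is Fermat's theorem that four squares cannot lie in arithmetic progression, which already rules out $n\ge4$ progressions of squares with constant step. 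I would check whether this alone certifies absence outright for $n\ge 4$, which would make the Shor step serve mainly the divisor-enumeration bookkeeping.
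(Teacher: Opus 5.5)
Your main line (parametrize via \autoref{theorem_3_n_progressions}, derive a hypotenuse condition from $k=t^2$, factor with Shor's algorithm) is the paper's route. The paper proves this proposition by analogy with \autoref{byShors}, observing that the relevant terms satisfy $x^2+y^2=z$ and combining Shor's algorithm with the sum of two squares theorem. As you suspect, this line does not by itself yield an efficient certificate of absence, for several reasons:
\begin{itemize}
\item The step $k$ is part of the unknown solution, not an input, so ``factor $t$'' presupposes information you do not have.
\item Ranging over all candidate steps or all marked entries is not polynomial in $\log B$, and the divisor count of $t^2$ is not polynomially bounded either.
\item The hypotenuse test certifies absence only when no candidate root has a prime factor $\equiv 1 \pmod 4$, which rarely happens.
\end{itemize}
In the paper's formulation the tested integer $l+k$ is itself an entry, hence a perfect square, so the plain two-squares criterion is met automatically. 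Your hypotenuse version is the nontrivial form of that test, but it has the same limitation. So, as organized, your main argument has a genuine gap in the ``efficient certification'' claim.

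Your closing remark, however, is more than a supporting fact: it is a complete proof, and it should replace the main line. Every order to which \autoref{theorem_3_n_progressions} applies has $n\ge 4$ ($n=4,5$ or $n\ge 7$). In a magic square of squares, each progression $A_j=\{l_j,l_j+k,\dots,l_j+(n-1)k\}$ would then contain at least four perfect squares in arithmetic progression, with $k\neq 0$ because the entries are distinct. This is impossible by Fermat's theorem on four squares in arithmetic progression (proved by Euler). Hence no solution of the prescribed form exists for any marked set. The algorithm that makes no oracle queries and always outputs ``absent'' is therefore a correct and efficient certifier, and neither Shor's algorithm nor the square-distance hypothesis is needed.

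The square-distance hypothesis would in fact suffice on its own, even for three terms. If $k=t^2$ and $y^2,x_1^2,x_2^2$ are consecutive terms, the integer right triangle with legs $x_2-y$, $x_2+y$ and hypotenuse $2x_1$ has area $t^2$, contradicting Fermat's right-triangle theorem.

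Compared with the paper's argument, your Fermat route gives an unconditional, rigorous certificate with no factoring. It also shows that the proposition is vacuous in this setting. The Shor-based template only has real content where no such global arithmetic obstruction is available.
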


The algorithm combines Shor's algorithm with number-theoretic constraints arising from the sum of two squares theorem \cite{dudley1969}. For more details, see the proofs in \autoref{proofs}.

\section{Quantum communication protocol}\label{protocol}

As a proof of concept, we apply Algorithm~\ref{alg:search_qft} to a two-party protocol. Party~A has a quantum computer, while Party~B holds a quantum oracle for a secret large magic square; see \autoref{fig:distributed_protocol}. The parties run Algorithm~\ref{alg:search_qft} distributively so that A reconstructs the square. Party~B also adds \emph{small-bias noise} \cite{smallbias}, whose small nonzero Fourier coefficients make guessing harder while minimally affecting the target spectrum.

The protocol can be summarized as follows.

\begin{enumerate}
    \item Party~B constructs a large magic square that has entries of the class described in \autoref{theorem_3_n_progressions}.

    \item Party~B sends a compact classical description specifying one representative entry from each of the arithmetic progressions.

    \item The quantum oracle $U_f$ of Algorithm~1 with small-bias noise is constructed and held by Party~B. 
    
    \item Party~A and Party~B run \textbf{Algorithm~1} distributively: Party~A prepares the state and applies the initial Hadamard transform, Party~B evaluates $U_f$ and returns the output state. Party~A completes the QFT and measurement.
    
    \item Step~4 is repeated until Party~A recovers the periodic structure detected by Algorithm~1, which together with Step~2 establishes the shared magic square.
    
    \item Party~B sends classical instructions pointing to selected entries of the magic square and transformations to be performed on them. Party~A decodes the bits of the message by applying the agreed transformations.
\end{enumerate}


\begin{figure}[t]
\centering
\begin{tikzpicture}[>=latex, line width=0.8pt, font=\small]

\node[draw, rounded corners=2pt, minimum width=2.2cm, minimum height=1.5cm] (A) at (0,0) {};
\node[draw, rounded corners=2pt, minimum width=2.2cm, minimum height=1.5cm] (B) at (4.6,0) {};

\node at (0,0.35) {Party~A};
\node at (4.6,0.35) {Party~B};

\node[draw, rounded corners=2pt, minimum width=0.9cm, minimum height=0.45cm] (QA) at (0,-0.35) {QC};
\node[draw, rounded corners=2pt, minimum width=0.9cm, minimum height=0.45cm] (O) at (4.6,-0.35) {\(U_f\)};

\draw[<->] (A.east) -- node[above, font=\scriptsize] {quantum channel} (B.west);
\draw[->] ($(B.west)+(0,-0.55)$) -- node[above, font=\scriptsize] {classical channel} ($(A.east)+(0,-0.55)$);

\end{tikzpicture}
\caption{Illustrative representation of the protocol. Party~B holds the oracle \(U_f\), while Party~A has a full quantum computer.}
\label{fig:distributed_protocol}
\end{figure}

\section{Future work}

Future work includes quantum communication protocols based on large-order magic squares and possible reductions to constrained Latin-square problems, obtained by partially reversing the Latin-square construction used in the proof of \autoref{theorem_3_n_progressions}. This is relevant because Latin squares are closely connected to uniform sampling \cite{088ea90b-206e,8740-320eb5d0f4b6} and can improve on random Monte Carlo sampling.


\nocite{*}
\bibliography{generic}

\clearpage
\appendix
\phantomsection
\label{appendix}
\renewcommand{\sectionautorefname}{Appendix}

\section{Basics of Quantum Fourier Transform}\label{basics_QFT}
QFT is the quantum analog of the classical discrete Fourier transform, and it can be viewed as a change of basis. This transformation maps quantum states in the computational basis into a superposition of basis states in the Fourier basis, defined as:

\begin{equation}\label{QFT}
QFT \ket{x} = \frac{1}{\sqrt{2^n}} \sum_{k=0}^{2^n - 1} \omega^{xk} \ket{k},
\end{equation}

where $x \in \{0, 1, \dots, 2^n - 1\}$ is the integer index of the state in decimal, $\omega = e^{2\pi i / 2^n}$ is the $2^n$-th root of unity, and $k \in \{0, 1, \dots, 2^n - 1\}$ is the index of the Fourier basis state. When the QFT is applied to a superposition of basis states, the phase factor $\omega^{xk} = e^{\frac{2\pi i xk}{2^n}}$ introduces constructive and destructive interference, which is crucial for detecting periodicities.

For example, starting with a state 

\begin{equation*}
\ket{\phi_0} = \sum_{x=0}^{2^n - 1} \alpha_x \ket{x},
\end{equation*}

where $\alpha_x$ are complex coefficients, the QFT maps it to:

\begin{equation*}
QFT \ket{\phi_0} = \sum_{x=0}^{2^n - 1} \alpha_x QFT \ket{x}.
\end{equation*}

Substituting the QFT definition from \autoref{QFT}, we get:

\begin{equation*}
\ket{\phi_1} = \sum_{x=0}^{2^n - 1} \alpha_x \left( \frac{1}{\sqrt{2^n}} \sum_{k=0}^{2^n - 1} \omega^{xk} \ket{k} \right),
\end{equation*}

which can be rearranged as:

\begin{equation*}
\ket{\phi_1} = \frac{1}{\sqrt{2^n}} \sum_{k=0}^{2^n - 1} \left( \sum_{x=0}^{2^n - 1} \alpha_x \omega^{xk} \right) \ket{k}.
\end{equation*}

This shows that the weights for each Fourier basis state are given by:

\begin{equation}\label{fourier_weights}
\tilde{\alpha}_k = \frac{1}{\sqrt{2^n}} \sum_{x=0}^{2^n - 1} \alpha_x \omega^{xk},
\end{equation}

which depend on the interference of the input coefficients $\alpha_x$.

The QFT is a linear transformation, defined by its action on computational basis states $\{ \ket{x} \}$, mapping each $\ket{x}$ to a superposition of $\{ \ket{k} \}$. For $n$-qubits, the transformation is represented by the unitary matrix shown in \autoref{QFTmatrix}. 
\begin{equation}\label{QFTmatrix}
\scalebox{0.97}{$
\frac{1}{\sqrt{2^n}} \begin{bmatrix}
1 & 1 & 1 & \cdots & 1 \\
1 & \omega & \omega^2 & \cdots & \omega^{2^n - 1} \\
1 & \omega^2 & \omega^4 & \cdots & \omega^{2(2^n - 1)} \\
\vdots & \vdots & \vdots & \ddots & \vdots \\
1 & \omega^{2^n - 1} & \omega^{2(2^n - 1)} & \cdots & \omega^{(2^n - 1)(2^n - 1)}
\end{bmatrix}$}
\end{equation}
This matrix can be implemented in a quantum circuit using a sequence of Hadamard gates and controlled-phase gates. For a more elaborate introduction to QFT, see \cite{nielsen2000quantum}.

\paragraph{Boolean Functions as Phase Oracles}

A Boolean function $f(x)$ can be encoded in a quantum algorithm through a phase oracle
\[
U_f\ket{x}=(-1)^{f(x)}\ket{x}.
\]
This is the form used in the main algorithms. One standard way to implement it is to start from the reversible Boolean oracle
\begin{equation}\label{oracle_equation}
O_f\ket{x}\ket{y}=\ket{x}\ket{y\oplus f(x)},
\end{equation}
where $\oplus$ denotes addition modulo $2$. Applying $O_f$, then a $Z$ gate to an ancilla initialized in $\ket{0}$, and then $O_f$ again gives
\[
\ket{x}\ket{0}
\mapsto
\ket{x}\ket{f(x)}
\mapsto
(-1)^{f(x)}\ket{x}\ket{f(x)}
\mapsto
(-1)^{f(x)}\ket{x}\ket{0}.
\]
Thus the ancilla is uncomputed, and the Boolean function is encoded as the phase oracle $U_f$.

\begin{figure}[H]
    \centering
    \begin{quantikz}
\lstick{$\ket{x}$} & \qw & \gate[2]{O_f} & \qw & \rstick{$\ket{x}$} \qw \\
\lstick{$\ket{y}$} & \qw & \qw & \qw & \rstick{$\ket{y \oplus f(x)}$} \qw
\end{quantikz}
    \caption{Reversible Boolean oracle $O_f$ used to implement the phase oracle $U_f\ket{x}=(-1)^{f(x)}\ket{x}$. The oracle leaves $\ket{x}$ unchanged and maps the ancilla to $\ket{y\oplus f(x)}$.}
    \label{fig:oracle}
\end{figure}

\section{Searching with Quantum Fourier Transform}
\label{sec:searchQFT}

The Quantum Fourier Transform is an efficient method for extracting periodicity and symmetries in quantum states. Typically, QFT comes together with Boolean functions that encode the problem into a quantum system. See \autoref{basics_QFT} for an elementary introduction to QFT and quantum oracles for encoding Boolean functions.

\paragraph{Encoding the Marked Set into Phase Factors}

Let $f(x)$ be a Boolean function that equals 1 for the entries of interest (e.g., all squares in a subset of the integers) and 0 elsewhere. To perform the QFT, the function $f(x)$ must first be encoded in an appropriate form within a quantum state.

Start with a quantum circuit consisting of $n$ registers and an ancillary qubit. All qubits are initialized to the $\ket{0}$ state, resulting in the initial state:

\begin{equation*}
\ket{\psi_0} = \ket{0}^{\otimes n} \otimes \ket{0}.
\end{equation*}

Next, apply Hadamard gates to all $n$ registers to create a uniform superposition over all computational basis states $\ket{x}$:

\begin{equation*}
\ket{\psi_1} = \frac{1}{\sqrt{2^n}} \sum_{x=0}^{2^n - 1} \ket{x} \otimes \ket{0}.
\end{equation*}

Use the reversible Boolean oracle $O_f$ from \autoref{oracle_equation}, which computes $f(x)$ into the ancillary qubit:
$O_f \ket{x} \ket{0} = \ket{x} \ket{f(x)}$. This gives

\begin{equation*}
\ket{\psi_2} = O_f \ket{\psi_1} = \frac{1}{\sqrt{2^n}} \sum_{x=0}^{2^n - 1} \ket{x} \otimes \ket{f(x)}.
\end{equation*}


We then apply a phase conditioned on the ancillary qubit and uncompute by applying $O_f$ again. This restores the ancilla to $\ket{0}$ and leaves $f(x)$ encoded only as the phase oracle $U_f\ket{x}=(-1)^{f(x)}\ket{x}$:

\begin{equation}\label{f(x)_encoded_in_phase}
\ket{\psi_3}
=
\frac{1}{\sqrt{2^n}}
\sum_{x=0}^{2^n - 1}
(-1)^{f(x)}\ket{x}\otimes\ket{0}.
\end{equation}

\paragraph*{Final Step}

The function $f(x)$ is now encoded as phase factors in the computational basis states, with the ancilla uncomputed. To analyze the periodicities in $f(x)$, we apply the QFT to the computational registers. The QFT transformation, acts on the computational basis states as:

\[
QFT \ket{x} = \frac{1}{\sqrt{2^n}} \sum_{k=0}^{2^n - 1} \omega^{xk} \ket{k},
\]
where $\omega = e^{2\pi i / 2^n}$. Applying the QFT to \autoref{f(x)_encoded_in_phase} yields:

\begin{equation*}
QFT \ket{\psi_3} = \frac{1}{\sqrt{2^n}} \sum_{x=0}^{2^n - 1} (-1)^{f(x)} \left( \frac{1}{\sqrt{2^n}} \sum_{k=0}^{2^n - 1} \omega^{xk} \ket{k} \right) \otimes \ket{0}.
\end{equation*}

This can be rewritten as:

\begin{equation*}
\ket{\psi_4} = \frac{1}{2^n} \sum_{k=0}^{2^n - 1} \left( \sum_{x=0}^{2^n - 1} (-1)^{f(x)} \omega^{xk} \right) \ket{k} \otimes \ket{0}.
\end{equation*}

The state $\ket{\psi_4}$ encodes information about the periodicities in $f(x)$. For example, if $f(x)$ is periodic with period $p$, the QFT output will exhibit prominent peaks at $k = m \cdot 2^n / p$, where $m$ is an integer.

\section{Shifted oracles and differences between first terms}
\label{appendix_shifted_oracles}

In the setting of \autoref{theorem_3_n_progressions}, suppose the marked set contains arithmetic progressions
$$
A_j=\{l_j,\ l_j+k,\ l_j+2k,\dots,l_j+(n-1)k\}, \qquad j=1,\dots,n,
$$
all with the same common difference $k$, and let
$$
S=\bigcup_{j=1}^n A_j \subseteq \{1,\dots,B\}.
$$
Let $f:\{1,\dots,B\}\to\{0,1\}$ be the indicator of $S$, and let the phase oracle be
$$
U_f\ket{x}=(-1)^{f(x)}\ket{x}.
$$

For an integer shift $s$, define the shifted oracle by conjugating $U_f$ with the shift $x\mapsto x+s$ on the domain whenever this remains inside $\{1,\dots,B\}$. Indeed, if $T_s$ denotes the shift operator $T_s\ket{x}=\ket{x+s}$, then
$$
T_s^\dagger U_f T_s\ket{x}=(-1)^{f(x+s)}\ket{x},
$$
so conjugation by $T_s$ produces the phase oracle for the shifted indicator $x\mapsto f(x+s)$.

Interfering $U_f$ with this shifted oracle probes the overlap
$$
C(s)=|S\cap(S-s)|.
$$

Composing $U_f$ with the shifted oracle gives
$$
U_fU_f^{(s)}\ket{x}=(-1)^{f(x)+f(x+s)}\ket{x},
$$
so, when this operator is tested on a uniform superposition, e.g., expectation value by the Hadamard-test \cite{lin2022lecturenotesquantumalgorithms,10.1007/978-3-030-35423-7_21}, the resulting signal depends on the overlap $C(s)=|S\cap(S-s)|$.

Now consider two progressions $A_i$ and $A_j$. Their overlap after shifting by $s$ is nonzero exactly when
$$
s=(l_j-l_i)+mk
$$
for some $m\in\{-(n-1),\dots,n-1\}$. Therefore, the peaks of the autocorrelation $C(s)$ occur at values of the form
$$
(l_j-l_i)+mk.
$$
The signal recovers the first terms $l_1,\dots,l_n$ indirectly by the peaks arranged in packets centered at the pairwise differences $l_j-l_i$, with known offsets $mk$.

Since $k$ is already the quantity targeted by the QFT step, shifted-oracle interference can be used to extract the difference set
$$
\{\,l_j-l_i : i,j\in\{1,\dots,n\}\,\}.
$$

\section{Proofs of theorems, propositions \& lemmas}\label{proofs}

\begin{proof}[\textbf{Proof of \autoref{triangles}}]
For any constant $s\neq M/3$, the lines $\mathcal{L}^{[x,s]}$, $\mathcal{L}^{[y,s]}$, and $\mathcal{L}^{[z,s]}$ intersect in the plane $x+y+z=M$ to form an equilateral triangle. By \autoref{parallels}, any two points in this plane that share a coordinate determine a line belonging to one of the three families of parallel lines. In particular, any line of the form $\mathcal{L}^{[x,k_1]}$, $\mathcal{L}^{[y,k_2]}$, or $\mathcal{L}^{[z,k_3]}$ is parallel to $\mathcal{L}^{[x,s]}$, $\mathcal{L}^{[y,s]}$, or $\mathcal{L}^{[z,s]}$, respectively, independently of the values of $k_1$, $k_2$, and $k_3$.
\end{proof}

\begin{proof}[\textbf{Proof of \autoref{proposition1}}]
The operator $\mathcal{T}_{(i,j)}$ acts by swapping two coordinates of each point, and therefore permutes the three families of parallel lines that determine the hexagon (see \autoref{fig:hexagon}). Since coordinate transpositions preserve distances and angles, they preserve regularity; hence the transformed points $p'_1,\ldots,p'_6$ again form a regular hexagon.

It remains to show that $p'_0$ is its barycenter. In the original configuration, the barycenter $p_0$ is characterized by the same coordinate relations with the six vertices, and these relations are preserved under $\mathcal{T}{(i,j)}$, since the map only permutes coordinates. Therefore, $\mathcal{T}_{(i,j)}({p_0})={p'_0}$ is the barycenter of the transformed hexagon.
\end{proof}

\begin{proof}[\textbf{Proof of \autoref{middle=n/3}}]
The system of Diophantine equations \autoref{dioph_system} corresponding to the magic square  can be written in matrix form as follows:

\begin{equation}\label{system_matrix_form}
\scalebox{0.9}{$
\begin{bmatrix} 
1 & 1 & 1 & 0 & 0 & 0 & 0 & 0 & 0\\
0 & 0 & 0 & 1 & 1 & 1 & 0 & 0 & 0\\
0 & 0 & 0 & 0 & 0 & 0 & 1 & 1 & 1\\
1 & 0 & 0 & 1 & 0 & 0 & 1 & 0 & 0\\
0 & 1 & 0 & 0 & 1 & 0 & 0 & 1 & 0\\
0 & 0 & 1 & 0 & 0 & 1 & 0 & 0 & 1\\
0 & 0 & 1 & 0 & 1 & 0 & 1 & 0 & 0\\
1 & 0 & 0 & 0 & 1 & 0 & 0 & 0 & 1

\end{bmatrix} 
\begin{bmatrix} x_{a}\\ x_{d}\\ x_{g}\\ x_{b}\\ x_{e}\\ x_{h}\\ x_{c}\\ x_{f}\\ x_{i}\end{bmatrix}
= \begin{bmatrix} M\\ M\\ M\\ M\\ M\\ M\\ M\\ M\end{bmatrix}
$}
\end{equation}

The general solution of the system over $\mathbb{R}$ can be expressed as the sum of a particular solution and the nullspace of the associated matrix, as described in \cite{strang2011}. Recall that the nullspace (or kernel) of a matrix $A$, 
denoted by
$\left\{ \vec{x} : A\vec{x} = \vec{0}\right\}$ represents the set of all solutions to the homogeneous system. To compute the nullspace, we determine the reduced row echelon form of the augmented matrix associated with the system. This process simplifies the matrix while preserving its solution set, allowing us to identify a basis for the nullspace and construct the general solution.

\begin{equation*}
\scalebox{0.8}{$
\left[
\begin{array}{@{}*{9}{c}@{\hspace{4pt}}|@{\hspace{4pt}}c@{}}
1 & 1 & 1 & 0 & 0 & 0 & 0 & 0 & 0 & 0 \\
0 & 0 & 0 & 1 & 1 & 1 & 0 & 0 & 0 & 0 \\
0 & 0 & 0 & 0 & 0 & 0 & 1 & 1 & 1 & 0 \\
1 & 0 & 0 & 1 & 0 & 0 & 1 & 0 & 0 & 0 \\
0 & 1 & 0 & 0 & 1 & 0 & 0 & 1 & 0 & 0 \\
0 & 0 & 1 & 0 & 0 & 1 & 0 & 0 & 1 & 0 \\
0 & 0 & 1 & 0 & 1 & 0 & 1 & 0 & 0 & 0 \\
1 & 0 & 0 & 0 & 1 & 0 & 0 & 0 & 1 & 0 \\
\end{array}
\right]
\!\!\sim\!\!
\left[
\begin{array}{@{}*{9}{c}@{\hspace{4pt}}|@{\hspace{4pt}}c@{}}
1 & 0 & 0 & 0 & 0 & 0 & 0 & 0 & 1 & 0 \\
0 & 1 & 0 & 0 & 0 & 0 & 0 & 1 & 0 & 0 \\
0 & 0 & 1 & 0 & 0 & 0 & 0 & -1 & -1 & 0 \\
0 & 0 & 0 & 1 & 0 & 0 & 0 & -1 & -2 & 0 \\
0 & 0 & 0 & 0 & 1 & 0 & 0 & 0 & 0 & 0 \\
0 & 0 & 0 & 0 & 0 & 1 & 0 & 1 & 2 & 0 \\
0 & 0 & 0 & 0 & 0 & 0 & 1 & 1 & 1 & 0 \\
0 & 0 & 0 & 0 & 0 & 0 & 0 & 0 & 0 & 0 \\
\end{array}
\right]
$}
\end{equation*}

\noindent Thus, the nullspace is given by vectors of the form:

\begin{equation*}
\scalebox{0.9}{$
\begin{bmatrix}  -x_i\\-x_f \\x_f +x_i\\x_f +2x_i\\ 0\\-x_f - 2x_i\\-x_f - x_i\\ x_f \\ x_i \end{bmatrix}
= x_f \begin{bmatrix}  0\\-1 \\1\\1\\ 0\\-1\\-1\\1\\0 \end{bmatrix} + x_i \begin{bmatrix}  -1\\ 0 \\1\\2\\ 0\\-2\\-1\\0 \\1 \end{bmatrix}
$}
\end{equation*}

Because every row of \autoref{system_matrix_form} has exactly three $1$s, an evident solution to \autoref{system_matrix_form} (over $\mathbb{R}$) is a vector with $\frac{M}{3}$ in all of its entries. Therefore, the general solution is: 

\begin{equation}\label{zeros_in_middle}
\scalebox{0.9}{$
x_f \begin{bmatrix}  0\\-1 \\1\\1\\ 0\\-1\\-1\\1\\0 \end{bmatrix} + x_i \begin{bmatrix}  -1\\ 0 \\1\\2\\ 0\\-2\\-1\\0 \\1 \end{bmatrix} + 
\begin{bmatrix}
M/3 \\
M/3 \\
M/3 \\
M/3 \\
M/3 \\
M/3 \\
M/3 \\
M/3 \\
M/3
\end{bmatrix}
$}
\end{equation}

Observe that the middle entries of the three vectors above are the entries that correspond to $x_e$. They have the values $0$, $0$, and $\frac{M}{3}$, respectively. Therefore, $x_e = \frac{M}{3}$.  
\end{proof}

\begin{proof}[\textbf{Proof of \autoref{trace_of_diag}}]
Let $j=\left(\frac{M}{3},\frac{M}{3},\frac{M}{3}\right)$, where $M$ is the magic constant of the square in \autoref{fig:3by3_letters}. This point is fixed by every transposition in $\mathscr{T}$ and lies in the plane $x+y+z=M$.

For any point $(a,b,c)$, its Euclidean distance to $j$ is
$\sqrt{\left(\frac{M}{3}-a\right)^2+\left(\frac{M}{3}-b\right)^2+\left(\frac{M}{3}-c\right)^2}$.
Since each operator $\mathcal{T}_{(i,j)}\in\mathscr{T}$ merely permutes the coordinates, this distance is invariant under the action of $\mathscr{T}$. Hence the trajectory of $(a,b,c)$ lies on a sphere centered at $j$. Because all points in the trajectory also satisfy $x+y+z=M$, they lie on the circle obtained by intersecting this sphere with the plane $x+y+z=M$.

Now suppose that $p$ is a triplet corresponding to a diagonal of the magic square. By \autoref{middle=n/3}, the middle entry is $M/3$, so every such triplet contains $M/3$ as one of its coordinates. Therefore every point in the trajectory of $p$ shares one coordinate with $j$. By \autoref{two_points_common_line}, these points lie on lines from the three families of parallel lines through $j$, and by \autoref{triangles} their arrangement is hexagonal. Thus, the six points in the trajectory form a regular hexagon with barycenter $j$.
\end{proof}

\begin{proof}[\textbf{Proof of \autoref{pattern_theorem}}]
The claim follows from the left panel of \autoref{fig:cube_projection} together with the configuration in \autoref{fig:pattern all lines}. The latter exhibits a ``hexagon of hexagons'': a large hexagon whose vertices are the barycenters of six smaller congruent hexagons.

\begin{itemize}
\item \textit{Projecting a hexagon yields a three-term arithmetic progression.}
By the left panel of \autoref{fig:cube_projection}, projecting the vertices and barycenter of a regular hexagon onto a coordinate axis produces three points in arithmetic progression.

\item \textit{Projecting the barycenters yields another three-term arithmetic progression.}
The barycenters of the six small hexagons form a larger hexagon. Projecting these barycenters onto a coordinate axis therefore again produces three points in arithmetic progression.

\item \textit{The full configuration yields three nested progressions.}
Projecting all points of the six small hexagons onto the same axis, each hexagon contributes a three-term arithmetic progression. Since the six hexagons are congruent and aligned as in \autoref{fig:pattern all lines}, these progressions are themselves translates of one another and overlap to give rise to three distinct arithmetic progressions of length $3$. Their middle terms, corresponding to the barycenters, also form an arithmetic progression.
\end{itemize}

Thus, passing from a single hexagon to the hexagon of hexagons replaces each term of one three-term arithmetic progression by another three-term arithmetic progression.

The right panel of \autoref{fig:cube_projection} illustrates the projection of the points in \autoref{fig:pattern all lines}. Any other valid configuration is qualitatively the same, by \autoref{corollary_cube_projection} and the previous results establishing the existence and alignment of the relevant hexagons.
\end{proof}

\begin{proof}[\textbf{Proof of \autoref{theorem_weighted}}] (Sketch)
The proof is a variation of that of \autoref{pattern_theorem}, combined with Thales's proportionality theorem \cite{coxeter1969geometry}. The main new feature is that the weighted plane is tilted relative to the standard plane $x+y+z=M$.

To simplify the argument, we first work on the plane $x+y+z=M$ and then project the resulting configuration onto the weighted plane $w_xx+w_yy+w_zz=M$. The lines $\mathcal{L}^{[j,k]}$, defined on the standard plane by intersecting it with planes perpendicular to the coordinate axes, have analogous counterparts on the weighted plane. Under projection, they again produce a periodic pattern analogous to \autoref{fig:pattern all lines}, but the three families of parallel lines are no longer affected uniformly: depending on the weights $w_x,w_y,w_z$, some directions are stretched while others are contracted.

Thales's proportionality theorem then shows that the pattern observed on the $i$th axis is stretched or contracted relative to its most stretched occurrence by a factor determined by the weight $w_i$. In the special case $w_x=w_y=w_z$, the weighted plane reduces to the standard plane $x+y+z=M$, so all projected patterns coincide. Thus, \autoref{pattern_theorem} is the equal-weight case of the present result.
\end{proof}

\begin{proof}[\textbf{Proof of \autoref{bound_non_existence}}]
Consider a $3\times3$ magic square whose largest entry is $t^z$ and whose second largest entry is $(t-1)^{z-1}$. Normalize all entries by dividing by $t^z$, so that the largest entry becomes $1$.

The normalized gap between the two largest entries is then
$\frac{t^z-(t-1)^{z-1}}{t^z}$.
As $t\to\infty$,
$$\lim_{t\to\infty}\frac{t^z-(t-1)^{z-1}}{t^z}=1.$$
Thus, for sufficiently large $t$, the gap between the largest and second largest normalized entries becomes arbitrarily close to $1$.

By \autoref{pattern_theorem}, a solution must give rise to three arithmetic progressions of length $3$. This becomes impossible once the normalized gap is at least $\frac{1}{3}$, since \autoref{pattern_theorem} requires three arithmetic progressions of length $3$, and such a configuration cannot lie in an interval of length $1$ with common difference at least $\frac{1}{3}$. The threshold is determined by
$\frac{t^z-(t-1)^{z-1}}{t^z}=\frac{1}{3}$.

Therefore, there exists $t_0$ such that for all $t\ge t_0$ no such magic square exists. Equivalently, any such magic square must have all entries $\le U$, where $U=(t_0+1)^z$.
\end{proof}

\begin{proof}[\textbf{Proof of \autoref{theorem_3_n_progressions}}]
Let $[n^2]=\{1,2,\dots,n^2\}$. For each residue $r\in\{0,\dots,n-1\}$, define
$
\mathrm{Fiber}(r)=\{m\in\mathbb{Z}: m\bmod n=r\}.
$
Each fiber contains exactly $n$ elements of $[n^2]$. We partition $[n^2]$ into $n$ sets by taking the smallest remaining element from each fiber at each step:
\begin{align*}
C_1 &= \bigcup_{r=0}^{n-1}\min(\mathrm{Fiber}(r)),\\
C_2 &= \bigcup_{r=0}^{n-1}\min(\mathrm{Fiber}(r)\setminus C_1),\\
C_3 &= \bigcup_{r=0}^{n-1}\min\!\bigl(\mathrm{Fiber}(r)\setminus (C_1\cup C_2)\bigr),\\
&\ \vdots\\
C_n &= \bigcup_{r=0}^{n-1}\min\!\bigl(\mathrm{Fiber}(r)\setminus \textstyle\bigcup_{i=1}^{n-1}C_i\bigr).
\end{align*}

Now place the integers $\max(C_1),\dots,\max(C_n)$ into a diagonal Latin square $Lat_1$ of order $n$. A diagonal Latin square is a Latin square whose two diagonals are transversals. Such squares exist for every $n>3$, and orthogonal diagonal Latin squares exist for every $n\neq 2,3,6$ \cite{c8ff8faf0b5e412fb9dcf5933f672e20,hilton1973double,Brown1993}. Let $Lat_2$ be an orthogonal diagonal Latin square of the same order.

Superimposing $Lat_1$ and $Lat_2$ produces a square $Lat_3$ whose entries are distinct pairs of the form $(\max(C_i),j)$, where $i\in[n]$ and $j\in\{0,\dots,n-1\}$. Applying the bijection $F(x,y)=x-y$ to each entry yields a square $Lat_4$ filled with the integers in $[n^2]$. By construction, $Lat_4$ is a normal magic square. This proves the claim for consecutive integers.

The same argument applies to any arithmetic progression of length $n^2$, since translating all entries by a fixed integer preserves the magic-square property.

A further variation is also possible. The construction depends on the $n$ values used to fill $Lat_1$. If some of these values are shifted by fixed amounts, the Latin-square structure still preserves the row, column, and diagonal sums. One must only ensure that, after applying $F(x,y)$, the resulting entries remain distinct.

Such a shift changes exactly $n$ entries of the resulting magic square, and these entries form an arithmetic progression of length $n$. Indeed, for fixed $d\in[n]$, the entries
$
(\max(C_d),1),(\max(C_d),2),\dots,(\max(C_d),n)
$
in $Lat_3$ are sent by $F$ to
$
D-1,D-2,\dots,D-n,
$
where $D=\max(C_d)$. The same holds for every $d\in[n]$. Therefore the construction extends from consecutive integers to patterns built from $n$ arithmetic progressions of length $n$, which proves the theorem.
\end{proof}

\begin{proof}[\textbf{Proof of \autoref{byShors}}]
By \autoref{pattern_theorem}, a solution corresponds to a pattern of $9$ integers of the form
\begin{multline}
l,\ l+k,\ l+2k,\ l+K,\ l+k+K,\ l+2k+K,\ l+2K,\ l+k+2K,\ l+2k+2K,
\end{multline}
where $l,k,K\in\mathbb{N}$ determine the pattern. For a magic square of squares, the shifting constant must be a square, so
$l=s^2$ for some $s\in\mathbb{N}$.
Assume that at least one defining distance is a square, i.e.,
$k=t^2$ or $K=u^2$ for some $t,u\in\mathbb{N}$.

If $k=t^2$, then
\[
l+k=s^2+t^2.
\]
If $K=u^2$, then
\[
l+K=s^2+u^2.
\]
Hence, in either case, the existence of a solution forces at least one integer in the pattern to satisfy
\[
x^2+y^2=z
\]
within the marked set.

By the sum of two squares theorem \cite{dudley1969}, an integer $z>1$ satisfies $x^2+y^2=z$ with $x,y\in\mathbb{Z}$ if and only if its prime factorization contains no prime $p\equiv 3 \bmod 4$ to an odd power \cite{heathbrown1984,hardy1938}. Classically, the bottleneck is factoring $z$. Shor's algorithm factors such integers efficiently, enabling a hybrid classical--quantum procedure to test this condition over the marked set and thereby certify the absence of solutions.
\end{proof}

\begin{proof}[\textbf{Proof of \autoref{byShors2}}]
The proof is analogous to that of \autoref{byShors}. The pattern is determined by the shifting constants $l_1,l_2,\ldots,l_n$ together with the defining distances $k$ and $K$. In the case of magic squares of squares, the relevant terms are again constrained by equations of the form
\[
x^2+y^2=z,
\]
so the same combination of Shor's algorithm and the sum of two squares theorem applies.

Unlike \autoref{byShors}, however, this proposition rules out only solutions of the specific form described in \autoref{theorem_3_n_progressions}, rather than all possible solutions in the marked set.
\end{proof}

\begin{proof}[\textbf{Proof of \autoref{Recovering_D_shift}}]

Using notation consistent with \autoref{appendix_shifted_oracles}:

Let, $$ P=\bigcup_{j=1}^n A_j,\qquad A_j=\{l_j,l_j+k,\dots,l_j+(n-1)k\},$$
where the starting points satisfy
$$l_j=l_1+(j-1)D$$
for some spacing $D$. Assume that
$$D>2(n-1)k.$$
Define the autocorrelation of the clean pattern $P$ by
$$C_P(s)=|P\cap(P-s)|.$$

Consider two progressions $A_i$ and $A_j$. An overlap at shift $s$ occurs precisely when $l_i+uk = l_j+vk-s$, i.e.,
$$s=(l_j-l_i)+(u-v)k$$
for some $u,v\in\{0,\dots,n-1\}$. Since
$$l_j-l_i=(j-i)D,$$
every nonzero contribution to $C_P(s)$ occurs at a shift of the form
$$s=qD+mk,$$
where $q,m\in\{-(n-1),\dots,n-1\}$.

For a fixed value of $q$, there are exactly $n-|q|$ pairs $(i,j)$ with $j-i=q$. Similarly, for a fixed value of $m$, there are exactly $n-|m|$ pairs $(u,v)$ with $u-v=m$. Hence
$$C_P(qD+mk)=(n-|q|)(n-|m|).$$
Therefore the autocorrelation of the clean pattern consists of triangular peaks centered at
$$0,\pm D,\pm 2D,\dots,\pm(n-1)D.$$ (see \autoref{Autocorrelation_figure}).
The peak at $0$ is the trivial self-overlap peak, while the off-center peaks encode the spacing $D$ between the starting points.

\begin{figure}[H]
    \centering
    \includegraphics[width=0.8\linewidth]{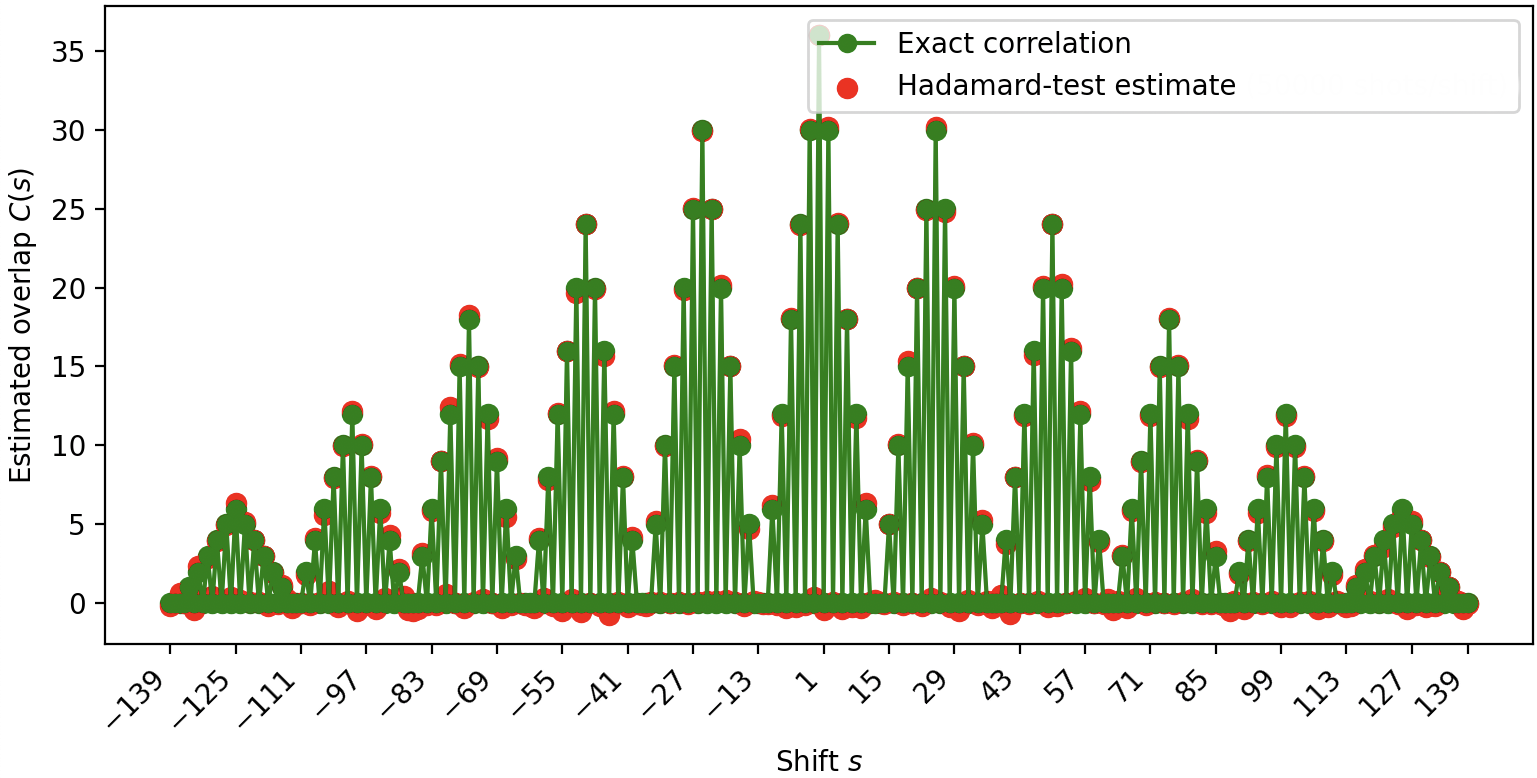}
    \caption{Autocorrelation for 6 arithmetic progressions of 6 terms, distance between arithmetic progressions D=25, string's size 256 with 0.1 noise. Note that, due to the triangular structure of the autocorrelation peaks, evaluating two shifts on each side of a peak is enough to infer its center.}\label{Autocorrelation_figure}
    \label{fig:spectrum1}
\end{figure}

Now let $N$ denote the set of additional marked points, so that
$$S=P\cup N.$$
Define
$$C_S(s)=|S\cap(S-s)|=C_P(s)+\Delta(s),$$
where $\Delta(s)$ denotes the perturbation produced by $N$. Assume that
$$|\Delta(s)|\le \eta$$
for all relevant shifts $s$, where $\eta$ is smaller than half of the one-step drop along the off-center peaks of $C_P$. Under this assumption (or less tight assumption for larger steps), the off-center peaks remain separated and retain their local shape, so their centers can still be identified from local evaluations of the signal.

It follows that the same tracing procedure as in the clean case still applies: once a sampled shift falls inside an off-center peak, a constant number of nearby evaluations determines its center. Since the off-center peaks are equally spaced, identifying one of them determines the spacing $D$, and hence the full starting-point pattern. The number of tested shifts is polynomial in $n$, so the reconstruction is polynomial-time.

For each fixed shift $s$, the Hadamard-test estimates a $\{\pm1\}$-valued quantity. Therefore estimating the normalized signal to fixed additive accuracy $\varepsilon$ requires $O(1/\varepsilon^2)$ shots, independently of the length of the binary string.
\end{proof}

\section{Examples}\label{examples}

\begin{example}\label{example_4_by_4}
The following examples illustrate the construction of magic squares from \autoref{theorem_3_n_progressions}. We begin with a normal magic square of order $4$ with entries from $1$ to $16$ (\autoref{fig:4by4_magic_square}).

\renewcommand{\arraystretch}{1.6} 
\begin{table}[H]
    \centering
    \begin{tabular}{|c|c|c|c|}
        \cline{1-4} 
        7 & 12 & 1 & 14 \\
        \cline{1-4}
        2 & 13 & 8 & 11 \\
        \cline{1-4}
        16 & 3 & 10 & 5 \\
        \cline{1-4}
        9 & 6 & 15 & 4 \\
        \cline{1-4}
    \end{tabular}
\vspace{0.5cm}        
    \caption{\normalfont A normal magic square of order $4$.}
    \label{fig:4by4_magic_square}
\end{table}

In this square, the values $\max C_1,\max C_2,\max C_3,\max C_4$ used in $Lat_1$ are $4,8,12,16$. Increasing them by $0$, $2$, $4$, and $9$, respectively, yields the magic square in \autoref{fig:4by4_magic_square_new}.

\renewcommand{\arraystretch}{1.6}
\begin{table}[H]
    \centering
    \begin{tabular}{|c|c|c|c|}
        \cline{1-4} 
        9  & 16 &  1 & 23 \\
        \cline{1-4}
        2  & 22 & 10 & 15 \\
        \cline{1-4}
        25 &  3 & 14 &  7 \\
        \cline{1-4}
        13 &  8 & 24 &  4 \\
        \cline{1-4}
    \end{tabular}
\vspace{0.5cm}        
    \caption{\normalfont An order $4$ magic square obtained by shifting the entries used in $Lat_1$.}
    \label{fig:4by4_magic_square_new}
\end{table}

To illustrate that the progressions may also have common difference greater than $1$, consider the following order-$4$ magic square, constructed from four arithmetic progressions with common difference $2$ (\autoref{fig:4by4_magic_square_step_2}).

\renewcommand{\arraystretch}{1.6}
\begin{table}[H]
    \centering
    \begin{tabular}{|c|c|c|c| }
        \cline{1-4} 
        17  & 31 &  1 & 39 \\
        \cline{1-4}
        3  & 37 & 19 & 29 \\
        \cline{1-4}
        43 &  5 & 27 &  13 \\
        \cline{1-4}
        25 &  15 & 41 &  7 \\
        \cline{1-4}
    \end{tabular}
\vspace{0.5cm}    
    \caption{\normalfont An order $4$ magic square obtained from four arithmetic progressions with common difference $2$, constructed using \autoref{theorem_3_n_progressions}.}
    \label{fig:4by4_magic_square_step_2}
\end{table}
\end{example}

\end{document}